\newtheorem{proposition}{Proposition}
\newtheorem{lemma}[proposition]{Lemma}
\newtheorem{theorem}[proposition]{Theorem}
\newtheorem{corollary}[proposition]{Corollary}
\theoremstyle{definition}
\theoremstyle{remark}
\begin{document}
\title{Increasing Peer Pressure on any Connected Graph Leads to Consensus}
\date{\today~-~preprint}
\author{Justin Semonsen}
\email[JS E-mail:]{js2118@math.rutgers.edu}
\affiliation{Department of Mathematics, 
Rutgers University, 
Piscataway, NJ 08854
}
\author{Christopher Griffin}
\email[CG E-mail:]{griffinch@ieee.org}
\affiliation{
Mathematics Department, 
United States Naval Academy, 
Annapolis, MD 21666}
\author{Anna Squicciarini}
\email[AS E-mail:]{asquicciarini@ist.psu.edu}
\affiliation{
College of Info. Sci. and Tech., 
Penn State University, 
University Park, PA 16802}
\author{Sarah Rajtmajer}
\email[SR E-mail: ]{sarah.rajtmajer@qs-2.com}
\affiliation{
Quantitative Scientific Solutions, 
Arlington, VA 22203}

\begin{abstract}

In this paper, we study a model of opinion dynamics in a social network in the presence increasing interpersonal influence, i.e., increasing peer pressure. Each agent in the social network has a distinct social stress function given by a weighted sum of internal and external behavioral pressures. We assume a weighted average update rule and prove conditions under which a connected group of agents converge to a fixed opinion distribution, and under which conditions the group reaches consensus. We show that the update rule is a gradient descent and explain its transient and asymptotic convergence properties. Through simulation, we study the rate of convergence on a scale-free network and then validate the assumption of increasing peer pressure in a simple empirical model.
\end{abstract}

\pacs{89.65.-s,02.10.Ox,02.50.Le}
 
\maketitle

\section{Introduction}
Beginning with DeGroot \cite{DeGroot74}, opinion models have been studied extensively (see e.g.,  \cite{FJ90,Krause00,HK02,SL03,BN05,WDA05,Toscani06,Weisb06,Lorenz07,BHT09,CFL09,KR11,DMPW12,CFT12,JM14,Dandekar09042013,Bindel2015248,Bhawalkar:2013:COF:2488608.2488615}). In these models, opinion is a dynamic state variable whose evolution in some compact subset of $\mathbb{R}^n$ is governed by an autonomous dynamical system. Using this formalism, opinion models have been unified with flocking models (see e.g., \cite{TT98,CS07}) in \cite{MT14}. Most recent work on opinion dynamics (and their unification with flocking models) considers the interaction of agents on a graph structure \cite{Dandekar09042013,Bindel2015248,Bhawalkar:2013:COF:2488608.2488615,MT14,JM14}. When considered on a lattice, these models are share characteristics to continuous variations of Ising models \cite{SL03}.

Recent work \cite{Bindel2015248,Bhawalkar:2013:COF:2488608.2488615} considers the evolution of opinion on a social network in which agents are resistant to change because of an innate belief. In particular, \cite{Bindel2015248,Bhawalkar:2013:COF:2488608.2488615} use a variant of the model in \cite{FJ90} and study this problem from a game-theoretic perspective by considering the price of anarchy on the opinion formation process on a connected graph. The existence of innate beliefs, which are hidden but affect (publicly) presented opinion, is supported in recent empirical work by Stephens-Davidowitz et al. \cite{CCHK15,S17,S14}. While the work in \cite{Bindel2015248,Bhawalkar:2013:COF:2488608.2488615} introduces the concept of the stubborn agent, it does not consider the effect of \textit{situationally variant} peer-pressure on agents' opinions, though statically weighted user connections are considered. Peer pressure in social networks is well-documented. Adoption of trends \cite{trends1, trends2}, purchasing behaviors \cite{purchasing}, beliefs and cultural norms \cite{norms1}, privacy behaviors \cite{RSGK16}, bullying \cite{bullying1}, and health behaviors \cite{obesity, weight, sleep, smoking} have all been linked to peer influence. 

In this paper, we %present complementary work to that in \cite{Bindel2015248} by
consider the problem of opinion dynamics on a social network of agents with innate beliefs in which peer-pressure is a dynamically changing quantity, independent of the opinions themselves. This has the mathematical effect of transforming the formerly autonomous dynamical system into a non-autonomous dynamical system. 

 Our notion of persuasion and peer-pressure affecting these dynamics  is related to the psychology literature on belief formation and social influence. In  particular,  we draw inspiration from studies on periodicity in human behavior, and social influence theories \cite{friedkin2006}. We follow Friedkin's foundational theory that strong ties are more likely to affect users' opinions and result in persuasion or social influence. Underpinning our model is also the notion  of mimicking. Brewer and more recently Van Bareen \cite{B91,BJCD09} suggest that mimicking is used when individuals feel out of  a group and therefore will alter their behavior (to a point \cite{BJCD09}) to be more socially accepted.

The resulting   model also accounts for agents with relatively varying resistance to changing their innate beliefs. We use a recent result from functional analysis on the composition of (distinct) contraction mappings along with the Sherman-Morrison formula to show:
\begin{enumerate}
\item Under increasing peer-pressure, the dynamical system converges.
\item If peer-pressure increases in an unbounded way, consensus emerges a weighted average of the innate beliefs of the individuals.
\item The opinion update process converges to a gradient descent, with linear convergence rate
\item The hypothesis of increasing peer-pressure can be supported with a live data set. 
\end{enumerate}
Work herein is complementary to (e.g.) \cite{Bindel2015248,Dandekar09042013,Bhawalkar:2013:COF:2488608.2488615} in that we consider a dynamic (increasing) peer-pressure coefficient with variable weights on initial belief. Additionally, we analyze the convergence rate of the dynamical system to the fixed point, while \cite{Bindel2015248,Bhawalkar:2013:COF:2488608.2488615} focus on the model from a game-theoretic perspective. 
%{\textcolor{red}{I suggest adding here or somewhere  the comparison with kleinberg. Something like "Our model is complementary to \cite{Bindel2015248}, in that XXXXX (opinion changing dynamics, bounded and unbounded pp, mimicking etc"}}

The remainder of this paper is organized as follows: In Section \ref{sec:BasicModel} we present the basic model. In Section \ref{sec:model} we prove convergence of the model and that increasing peer pressure leads to consensus in any connected graph. We discuss the convergence rate in Section \ref{sec:rate} by showing the dynamical system is, effectively, gradient descent. We briefly relate our work to the cost of anarchy work from \cite{Bindel2015248} in Section \ref{sec:anarchy}. In Section \ref{sec:experiment}, we validate the hypothesis of increasing peer pressure by fitting our model to a live data set. Conclusions and future directions are presented in Section \ref{sec:Conclusion}. 

\section{Problem Statement and Model}\label{sec:BasicModel}
We model a network of agents, representing individuals in a social network in which each user communicates with her friends/associates, but not necessarily the entire network. Assume that the agents' network is represented by a simple graph $G = (V,E)$ where vertexes $V$ are agents and edges $E$ are the social connections (communications) between them. It is clear that disconnected sections of the graph are independent, so we assume that $G$ is connected. For the remainder of the paper, let $V = \{1, 2 , \dots n\}$, so $E$ is a subset of the two-element subsets of $V$.
The state of Agent $i$ at time $k$ is a continuous value $x_i^{(k)} \in [0,1]$ that represents disclosed opinion on a bivalent topic (e.g., ``I support gun control'' or ``I like classical music''). Each agent has a constant preference $x^{+}_{i} \in [0,1]$ representing her inherent position on the topic. This may differ from the opinion disclosed to the public. The value $x^+_i$ represents inherent agent bias. Further, Agent $i$ is assigned a non-negative vertex weight $s_{i}$ and positive edge weights $w_{ij}$ respectively for $(i, j) \in E$. The weight $s_i$, termed \textit{stubbornness} \cite{Dandekar09042013}, models the tendency of Agent $i$ to maintain her (private) position $x^+_i$ in public. The edge weights $w_{ij}$ represent friendship affinity.  The set of all disclosed opinions is denoted by the vector $\mathbf{x}^{(k)}$ while the set of constant private preferences is $\mathbf{x}^+$. For the remainder of this paper, we refer to publicly disclosed opinions simply as opinions.

Agent $i$'s state is updated by minimizing its social stress: 
\begin{multline}
J_{i}\left(x_{i}^{(k)},\mathbf{x}^{(k-1)}, k\right) = s_{i}\left(x_{i}^{(k)} - x_{i}^{+}\right)^2\\
+ \rho^{(k)}\sum_{j = 1}^{n} w_{ij}\left(x_{i}^{(k)} - x_{j}^{(k-1)}\right)^2
\label{eqn:SocialStress}
\end{multline}
Here $\rho^{(k)}$ is the peer-pressure coefficient. In the sequel, we assume $\rho^{(k)}$ is an increasing function of $k$. As noted in \cite{Bindel2015248}, under these assumptions, the first order necessary conditions are sufficient for minimizing $J_{i}(x_{i}^{(k)},\mathbf{x}^{(k-1)}, k)$. %Solving:
%\begin{multline}
%\frac{1}{2}\frac{\partial J_i}{\partial x_i} = \\
%s_{i}\left(x_{i}^{(k)} - x_{i}^{+}\right) + \rho^{(k)}\sum_{j = 1}^{n} w_{ij}\left(x_{i}^{(k)} - x_{j}^{(k-1)}\right) = 0
%\label{eqn:Derivative}
%\end{multline}
%yields:
The optimal state for Agent $i$ at time $k$ is then:
\begin{equation}
x_{i}^{(k)} = \frac{s_{i}x_{i}^{+} + \rho^{(k)}\sum_{j = 1}^{n} w_{ij}x_{j}^{(k-1)}}{s_{i} + \rho^{(k)}d_i},
\label{eqn:ExtendedFixedPoint}
\end{equation}
where $d_{i} = \sum_{j = 1}^{n} w_{ij}$ is the weighted degree of vertex $i$. The implied update rule is simply a generalization of the DeGroot model variation found in \cite{FJ90} and generalizes the model in \cite{Bindel2015248} by including the stubbornness coefficient and an increasing peer-pressure term.

Let $\mathbf{A}$ be the $n \times n$ weighted adjacency matrix of $G$. In addition, let $\mathbf{D}$ be the $n \times n$ matrix with $d_{i}$ on the diagonal and let $\mathbf{S}$ be the $n \times n$ matrix with $s_{i}$ on the diagonal. Using these terms, the recurrence in Eq. (\ref{eqn:ExtendedFixedPoint}) can be written as: 
\begin{displaymath}
\mathbf{x}^{(k)} = \left(\mathbf{S} + \rho^{(k)}\mathbf{D}\right)^{-1}\left(\mathbf{S}\mathbf{x}^{+} + \rho^{(k)}\mathbf{A}\mathbf{x}^{(k-1)}\right)
\end{displaymath}

We say that the agents converge to consensus $\bar{\mathbf{x}}$ if there is some $N$ so that for all $n > N$, $\lVert \bar{\mathbf{x}} - \mathbf{x}^{(n)}\rVert < \epsilon$ for some small $\epsilon > 0$. This represents meaningful compromise on the issue under consideration. 

\section{Convergence} \label{sec:model}
In this section, we consider the update rule in Eq. (\ref{eqn:ExtendedFixedPoint}) as a sequence of contraction mappings each with its own fixed point. We then show that all these fixed points converge to a weighted average. The result rests on a variation of the contraction mapping theorem from \cite{G91}. 

\begin{lemma}[See Chapter 13 of \cite{GR01}] If $\mathbf{L} = \mathbf{D}-\mathbf{A}$ is the weighted graph Laplacian, then $\mathbf{L}$ has an eigenvalue $0$ with multiplicity $1$ and a corresponding eigenvector $\mathbf{1}$ where $\mathbf{1}$ is the vector of all $1$'s. \hfill\qed
\end{lemma}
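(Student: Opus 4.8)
The plan is to verify the two assertions separately: that $\mathbf{1}$ is an eigenvector with eigenvalue $0$, and that this eigenvalue is simple. The first is immediate from the definition of the weighted Laplacian. The $i$-th entry of $\mathbf{L}\mathbf{1}$ is $d_i - \sum_{j} w_{ij}$, which vanishes because $d_i = \sum_j w_{ij}$ by definition of the weighted degree. Hence $\mathbf{L}\mathbf{1} = \mathbf{0}$, so $0$ is an eigenvalue with eigenvector $\mathbf{1}$.

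For the multiplicity, I would first record the standard quadratic-form identity
\[
\mathbf{y}^{T}\mathbf{L}\mathbf{y} = \sum_{(i,j)\in E} w_{ij}\,(y_i - y_j)^2,
\]
obtained by expanding $\mathbf{y}^{T}\mathbf{D}\mathbf{y} - \mathbf{y}^{T}\mathbf{A}\mathbf{y}$ and regrouping terms using the symmetry $w_{ij} = w_{ji}$. Since each edge weight satisfies $w_{ij} > 0$, this identity shows that $\mathbf{L}$ is positive semidefinite, so every eigenvalue is nonnegative and $0$ is the smallest.

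The key step is to characterize the null space. Because $\mathbf{L}$ is real and symmetric, it is orthogonally diagonalizable, so the algebraic multiplicity of the eigenvalue $0$ equals $\dim\ker\mathbf{L}$. A vector $\mathbf{y}$ lies in $\ker\mathbf{L}$ if and only if $\mathbf{y}^{T}\mathbf{L}\mathbf{y} = 0$, which by the identity above forces $y_i = y_j$ for every edge $(i,j)\in E$. Here I would invoke the standing hypothesis that $G$ is connected: any two vertices are joined by a path, and equality of coordinates propagates along each edge of that path, so all coordinates of $\mathbf{y}$ must agree. Thus $\ker\mathbf{L} = \mathrm{span}\{\mathbf{1}\}$ is one-dimensional, and the eigenvalue $0$ has multiplicity $1$.

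The only points requiring care are the passage from $\mathbf{y}^{T}\mathbf{L}\mathbf{y}=0$ to $\mathbf{y}\in\ker\mathbf{L}$, which holds precisely because $\mathbf{L}$ is positive semidefinite, and the identification of geometric with algebraic multiplicity, which relies on symmetry. I expect the propagation-of-equality argument across a connected graph to be the substantive (though routine) heart of the proof, since it is the only place where connectedness of $G$ is actually used; dropping that hypothesis would yield a null space of dimension equal to the number of connected components.
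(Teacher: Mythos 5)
Your proof is correct. Note that the paper itself offers no proof of this lemma: it is quoted as a known result from Chapter 13 of \cite{GR01} (Godsil and Royle), so there is no internal argument to compare against. Your argument is the standard one --- $\mathbf{L}\mathbf{1}=\mathbf{0}$ from the definition of weighted degree, positive semidefiniteness via the edge-sum identity $\mathbf{y}^{T}\mathbf{L}\mathbf{y}=\sum_{(i,j)\in E}w_{ij}(y_i-y_j)^2$, and the kernel characterization in which connectedness propagates equality of coordinates along paths --- and you correctly flag the two points that need care: passing from $\mathbf{y}^{T}\mathbf{L}\mathbf{y}=0$ to $\mathbf{L}\mathbf{y}=\mathbf{0}$ (valid precisely because $\mathbf{L}$ is positive semidefinite) and invoking the paper's standing assumption that $G$ is connected, without which the multiplicity claim fails. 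Incidentally, your quadratic-form technique is the same ingredient the paper uses later in its proof of Lemma \ref{invmat}, where $\mathbf{x}^{T}\mathbf{L}\mathbf{x}=0$ is shown to force $\mathbf{x}=c\mathbf{1}$ via the spectral theorem; your edge-sum identity gives that step more directly.
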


\begin{lemma}\label{invmat}
For any $\rho^{(k)} > 0, \mathbf{S} + \rho^{(k)}\mathbf{L}$ is invertible. %where $\mathbf{L} = \mathbf{D}-\mathbf{A}$ is the graph laplacian.
\end{lemma}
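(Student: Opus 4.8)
The plan is to exploit the fact that $\mathbf{S} + \rho^{(k)}\mathbf{L}$ is symmetric and positive semidefinite, and to argue that (under the natural modeling assumption that at least one agent is stubborn) it is in fact positive definite, hence invertible. First I would observe that $\mathbf{S}$ is diagonal with non-negative entries and is therefore PSD, while the weighted Laplacian $\mathbf{L} = \mathbf{D}-\mathbf{A}$ is symmetric with the standard quadratic form
\[
\mathbf{v}^{\mathsf{T}}\mathbf{L}\mathbf{v} = \sum_{(i,j)\in E} w_{ij}\left(v_i - v_j\right)^2 \ge 0,
\]
so $\mathbf{L}$ is PSD as well. Since $\rho^{(k)} > 0$, the matrix $\mathbf{M} = \mathbf{S} + \rho^{(k)}\mathbf{L}$ is a non-negative combination of symmetric PSD matrices and is thus itself symmetric PSD.

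To upgrade this to positive definiteness I would examine the kernel of $\mathbf{M}$ directly. Suppose $\mathbf{M}\mathbf{v} = \mathbf{0}$; then $\mathbf{v}^{\mathsf{T}}\mathbf{M}\mathbf{v} = \mathbf{v}^{\mathsf{T}}\mathbf{S}\mathbf{v} + \rho^{(k)}\mathbf{v}^{\mathsf{T}}\mathbf{L}\mathbf{v} = 0$. Because this is a sum of two non-negative quantities, each must vanish separately. Vanishing of the Laplacian term, together with connectivity of $G$ (equivalently, the preceding lemma that $\mathbf{L}$ has a simple zero eigenvalue with eigenvector $\mathbf{1}$), forces $v_i = v_j$ along every edge and hence $\mathbf{v} = c\mathbf{1}$ for some scalar $c$. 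Substituting this into the stubbornness term gives $c^2 \sum_i s_i = 0$, so as soon as $\sum_i s_i > 0$ we get $c = 0$ and $\mathbf{v} = \mathbf{0}$; the kernel is trivial and $\mathbf{M}$ is invertible.

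The one genuine obstacle is the hypothesis on $\mathbf{S}$: if every $s_i = 0$ then $\mathbf{M} = \rho^{(k)}\mathbf{L}$ is honestly singular, so the statement truly requires at least one strictly positive stubbornness weight, and I would make this assumption explicit (it is natural, since an agent with $s_i = 0$ has no attachment to its innate belief). As an alternative route that exposes the same dependence, one can argue via diagonal dominance: the diagonal entries of $\mathbf{M}$ are $s_i + \rho^{(k)} d_i$ while the off-diagonal absolute row sums equal $\rho^{(k)} d_i$, so every row is weakly dominant and the rows with $s_i > 0$ are strictly dominant; connectivity makes $\mathbf{M}$ irreducible, and irreducible diagonal dominance then yields nonsingularity. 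Either argument is short, and the substantive content is simply tracking where connectivity and a single stubborn agent enter.
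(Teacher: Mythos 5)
Your proof is correct and takes essentially the same route as the paper's: both establish that $\mathbf{S} + \rho^{(k)}\mathbf{L}$ is symmetric positive semidefinite, show that vanishing of the quadratic form forces both the $\mathbf{S}$-term and the $\mathbf{L}$-term to vanish separately, use connectivity to conclude the vector is $c\mathbf{1}$, and then use the assumption that not all $s_i$ are zero to force $c=0$, giving positive definiteness. The paper also invokes ``not all $s_i$ are zero'' only inside its proof rather than in the lemma statement, so your explicit flagging of that hypothesis is a fair (and welcome) clarification rather than a divergence; the only cosmetic difference is that you kill the Laplacian kernel via the edge-sum form $\sum_{(i,j)\in E} w_{ij}(v_i-v_j)^2$ where the paper runs a spectral-theorem argument.
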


\begin{proof}
By definition, the graph Laplacian is a positive semidefinite symmetric matrix. In addition, the only eigenvector with eigenvalue 0 is the vector of all 1s, written $\mathbf{1}$.

Since $\mathbf{S}$ is symmetric and $s_{i} \geq 0$, $\mathbf{S} + \rho^{(k)}\mathbf{L}$ is positive semidefinite as well. Choose $\mathbf{x} \in \mathbb{R}^{n}$ such that $\mathbf{x}^{T}(\mathbf{S} + \rho^{(k)}\mathbf{L})\mathbf{x} = 0$. Then $\mathbf{x}^{T}(\mathbf{S} + \rho^{(k)}\mathbf{L})\mathbf{x} = \mathbf{x}^{T}\mathbf{S}\mathbf{x} + \rho^{(k)}\mathbf{x}^{T}\mathbf{L}\mathbf{x}$.  Since $\mathbf{S}$ and $\mathbf{L}$ are positive semidefinite and $\rho^{(k)} > 0$, this implies that $\mathbf{x}^{T}\mathbf{S}\mathbf{x} = \mathbf{x}^{T}\mathbf{L}\mathbf{x} = 0$.

Since $\mathbf{L}$ is symmetric, by the spectral theorem it has is an orthonormal basis of eigenvectors $\{{\bf b}_{1}, \dots {\bf b}_{n}\}$ with associated eigenvectors $\{\lambda_{1}, \dots \lambda_{n}\}$. Because $\mathbf{L}$ is positive semidefinite $\lambda_{i} \geq 0$, that $\mathbf{x}^{T}\mathbf{L}\mathbf{x} = \sum_{i = 1}^{n} \lambda_{i} (\mathbf{x}^{T}{\bf b}_{i})^{2}$. And, because $\mathbf{x}^{T}\mathbf{L}\mathbf{x} = 0$, if $\lambda_{i} \neq 0$, $\mathbf{x}^{T}{\bf b}_{i} = 0$.

It follows that $\bf x$ is an eigenvector of $\mathbf{L}$ with eigenvalue 0; that is, $\mathbf{x} = c{\bf 1}$ for some constant $c$, and therefore $\mathbf{x}^{T}\mathbf{S}\mathbf{x} = c^{2}\sum_{i=1}^{n} s_{i}$. Since $s_{i} \geq 0$ and not all $s_{i}$ are zero, we must have $c = 0$, so $\mathbf{x} = {\bf 0}$. Following, $\mathbf{S} + \rho^{(k)}\mathbf{L}$ is positive definite, and therefore invertible.
\end{proof}

%{\color{green} we need something here to introduce the next piece. MAybe something like "We now present the evolution of ...."}
%\textcolor{magenta}{I don't think we need any introduction. You can go straight in.}
%We now view each step of the evolutionary process as a function mapping the previous opinion vector to the next one. These functions $F_{k}: [0,1]^{n} \to [0,1]^{n}$ can be defined as follows: %$F_{k}: [0,1]^{n} \to [0,1]^{n}$ be defined as:
Define:
\begin{displaymath}
F_{k}(\mathbf{x}) = \left(\mathbf{S} + \rho^{(k)}\mathbf{D}\right)^{-1}\left(\mathbf{S}\mathbf{x}^{+} + \rho^{(k)}\mathbf{A}\mathbf{x}\right)
\end{displaymath}
and let:
\begin{equation}
G_k = F_k\circ F_{k-1} \circ \cdots \circ F_1
\label{eqn:Gk}
\end{equation}
Then $\mathbf{x}^{(k)} = F_k(\mathbf{\mathbf{x}}^{(k-1)})$ and  $\mathbf{x}^{(k)} = G_k(\mathbf{x}^{(0)})$. That is, iterating these $F_{k}$ captures the evolution of $\mathbf{x}^{(k)}$. We show that for each $k$, $F_k$ is a contraction and therefore has a fixed point by the Banach Fixed Point Theorem \cite{RF09}. We use this result in the proof of Theorem \ref{thm:Convergence}. 

\begin{lemma}\label{contraction}
For all $k, F_{k}$ is a contraction map with fixed point given by $\overline{\mathbf{x}}^{(k)} = (\mathbf{S} + \rho^{(k)}\mathbf{L})^{-1}\mathbf{S}x^{+}$.
\label{lem:Contraction}
\end{lemma}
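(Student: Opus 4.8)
The plan is to exploit the fact that $F_k$ is affine. Writing $\mathbf{M} = (\mathbf{S} + \rho^{(k)}\mathbf{D})^{-1}\rho^{(k)}\mathbf{A}$ and $\mathbf{c} = (\mathbf{S} + \rho^{(k)}\mathbf{D})^{-1}\mathbf{S}\mathbf{x}^+$ (I fix $k$ and suppress the superscript on $\rho$ where convenient), we have $F_k(\mathbf{x}) = \mathbf{M}\mathbf{x} + \mathbf{c}$, so that $F_k(\mathbf{x}) - F_k(\mathbf{y}) = \mathbf{M}(\mathbf{x} - \mathbf{y})$ and the Lipschitz behaviour of $F_k$ is entirely governed by $\mathbf{M}$. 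Since $G$ is connected, $\mathbf{S} + \rho^{(k)}\mathbf{D}$ is a positive diagonal matrix and admits a symmetric positive-definite square root, so the whole problem reduces to bounding the single matrix $\mathbf{M}$.

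First I would symmetrise. The matrix $\tilde{\mathbf{M}} = (\mathbf{S} + \rho^{(k)}\mathbf{D})^{-1/2}\,\rho^{(k)}\mathbf{A}\,(\mathbf{S} + \rho^{(k)}\mathbf{D})^{-1/2}$ is symmetric and similar to $\mathbf{M}$, so $\mathbf{M}$ has only real eigenvalues and $\rho(\mathbf{M}) = \lVert \tilde{\mathbf{M}} \rVert_2$. It therefore suffices to prove the operator inequality $-\mathbf{I} \prec \tilde{\mathbf{M}} \prec \mathbf{I}$, i.e.\ that both $\mathbf{I} - \tilde{\mathbf{M}}$ and $\mathbf{I} + \tilde{\mathbf{M}}$ are positive definite. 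Both factors unravel by congruence through $(\mathbf{S} + \rho^{(k)}\mathbf{D})^{-1/2}$: the matrix $\mathbf{I} - \tilde{\mathbf{M}}$ is congruent to $(\mathbf{S} + \rho^{(k)}\mathbf{D}) - \rho^{(k)}\mathbf{A} = \mathbf{S} + \rho^{(k)}\mathbf{L}$, which is positive definite by Lemma \ref{invmat}; and $\mathbf{I} + \tilde{\mathbf{M}}$ is congruent to $\mathbf{S} + \rho^{(k)}(\mathbf{D} + \mathbf{A})$, the signless Laplacian shifted by $\mathbf{S}$.

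The second factor is the main obstacle, and it is where the hypothesis that not all $s_i$ vanish becomes essential. For it I would mimic the proof of Lemma \ref{invmat}, using the identity $\mathbf{x}^T(\mathbf{D} + \mathbf{A})\mathbf{x} = \sum_{\{i,j\}\in E} w_{ij}(x_i + x_j)^2 \ge 0$ together with $\mathbf{x}^T\mathbf{S}\mathbf{x} = \sum_i s_i x_i^2 \ge 0$: if their sum vanishes then $x_i = -x_j$ across every edge, so on the connected graph $\mathbf{x}$ takes values $\pm t$ on the two sides of a bipartition, and $\mathbf{x}^T\mathbf{S}\mathbf{x} = 0$ with some $s_i > 0$ forces $t = 0$, whence $\mathbf{x} = \mathbf{0}$. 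This potentially singular bipartite direction of the signless Laplacian is exactly why a crude row-sum (i.e.\ $\infty$-norm) bound on $\mathbf{M}$ fails: rows with $s_i = 0$ have row sum equal to $1$, so one cannot read off a contraction constant directly and must instead rule out the eigenvalue $-1$ by this positive-definiteness argument.

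With $-\mathbf{I} \prec \tilde{\mathbf{M}} \prec \mathbf{I}$ in hand, $\rho(\mathbf{M}) = \lVert\tilde{\mathbf{M}}\rVert_2 < 1$. I would then introduce the weighted norm $\lVert \mathbf{y}\rVert_* = \lVert (\mathbf{S} + \rho^{(k)}\mathbf{D})^{1/2}\mathbf{y}\rVert_2$, in which the induced operator norm of $\mathbf{M}$ equals $\lVert\tilde{\mathbf{M}}\rVert_2 < 1$; hence $\lVert F_k(\mathbf{x}) - F_k(\mathbf{y})\rVert_* \le \lVert\tilde{\mathbf{M}}\rVert_2\,\lVert \mathbf{x} - \mathbf{y}\rVert_*$ and $F_k$ is a genuine contraction on the complete space $(\mathbb{R}^n, \lVert\cdot\rVert_*)$, so the Banach Fixed Point Theorem \cite{RF09} gives a unique fixed point. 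Finally I would identify it: setting $\overline{\mathbf{x}}^{(k)} = F_k(\overline{\mathbf{x}}^{(k)})$, multiplying through by $\mathbf{S} + \rho^{(k)}\mathbf{D}$ and collecting the $\overline{\mathbf{x}}^{(k)}$ terms yields $(\mathbf{S} + \rho^{(k)}\mathbf{L})\overline{\mathbf{x}}^{(k)} = \mathbf{S}\mathbf{x}^+$, and inverting via Lemma \ref{invmat} produces the claimed formula $\overline{\mathbf{x}}^{(k)} = (\mathbf{S} + \rho^{(k)}\mathbf{L})^{-1}\mathbf{S}\mathbf{x}^+$.
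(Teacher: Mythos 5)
Your proof is correct, but it takes a genuinely different route from the paper's. The paper's argument is probabilistic: it embeds the iteration matrix $\rho^{(k)}(\mathbf{S}+\rho^{(k)}\mathbf{D})^{-1}\mathbf{A}$ as the transient block of an $(n+1)\times(n+1)$ stochastic matrix with a single absorbing state, invokes absorbing Markov chain theory (connectivity plus some $s_i>0$ guarantees absorption from every state) to conclude that the powers of this block vanish, and from that concludes $F_k$ is a contraction. Your argument is purely linear-algebraic: you symmetrize by similarity/congruence, reduce the strict sandwich $-\mathbf{I}\prec\tilde{\mathbf{M}}\prec\mathbf{I}$ to positive definiteness of $\mathbf{S}+\rho^{(k)}\mathbf{L}$ (Lemma \ref{invmat}) on one side and of the shifted signless Laplacian $\mathbf{S}+\rho^{(k)}(\mathbf{D}+\mathbf{A})$ on the other, and then exhibit an explicit weighted norm $\lVert\cdot\rVert_*$ in which $F_k$ is a genuine contraction. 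Your approach buys rigor on a point the paper glosses over: the paper passes from ``the powers of the matrix tend to zero'' to ``the matrix operator norm is less than $1$,'' which is not valid for the Euclidean operator norm of a nonsymmetric matrix (spectral radius below one only guarantees the existence of some adapted norm); your $\lVert\cdot\rVert_*$ is exactly such an adapted norm, making the appeal to the Banach fixed point theorem airtight. You also isolate precisely where the hypothesis ``not all $s_i$ vanish'' enters on the $-1$ side of the spectrum (the bipartite direction of the signless Laplacian), which parallels the paper's use of the same hypothesis to guarantee absorption, and you correctly explain why a naive row-sum bound cannot work when some $s_i=0$. The paper's route buys brevity and probabilistic intuition, and it sidesteps the signless-Laplacian case analysis entirely. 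Both proofs conclude identically, solving $\overline{\mathbf{x}}^{(k)}=F_k\bigl(\overline{\mathbf{x}}^{(k)}\bigr)$ and inverting $\mathbf{S}+\rho^{(k)}\mathbf{L}$ to obtain the stated fixed point.
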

\begin{proof}
Let $\mathbf{B}$ be the $(n+1) \times (n+1)$ matrix given by adding a row and column to $\rho^{(k)}(\mathbf{S} + \rho^{(k)}\mathbf{D})^{-1}\mathbf{A}$ as follows: 
\begin{displaymath}
\mathbf{B} = \left[\begin{array}{cc} \rho^{(k)}(\mathbf{S} + \rho^{(k)}\mathbf{D})^{-1}\mathbf{A} & (\mathbf{S} + \rho^{(k)}\mathbf{D})^{-1}\mathbf{S}{\bf 1} \\ 0 & 1 \end{array}\right]
\end{displaymath}

The rows of $\mathbf{B}$ sum to $1$. To see this, replace $x^+$ and $x_i^{(k-1)}$ in Eq. (\ref{eqn:ExtendedFixedPoint}) with $1$. Thus $\mathbf{B}$ is a stochastic matrix for a Markov process with a single absorbing state. Since $G$ is connected and not all $s_{i}$ are equal to 0, a transition exists from each state to the steady state; thus from any starting state, convergence to the steady state is guaranteed. This means that $\lim_{i \to \infty}(\rho^{(k)}(\mathbf{S} + \rho^{(k)}\mathbf{D})^{-1}\mathbf{A})^{i} = 0$, so $\mathbf{A}$ is a convergent matrix. Equivalently, if $\lVert \cdot \rVert$ denotes the matrix operator norm, then, $\lVert \rho^{(k)}(\mathbf{S} + \rho^{(k)}\mathbf{D})^{-1}\mathbf{A} \rVert < 1$. Therefore for any $\mathbf{x},\mathbf{y} \in [0,1]^{n}$:
\begin{align*}
\Vert F_{k}(\mathbf{x}) - F_{k}({\bf y}) \Vert %&= \Vert (\mathbf{S} + \rho^{(k)}\mathbf{D})^{-1}(\mathbf{S}\mathbf{x}^{+} + \rho^{(k)}\mathbf{A}\mathbf{x}) -  (\mathbf{S} + \rho^{(k)}\mathbf{D})^{-1}(\mathbf{S}\mathbf{x}^{+} + \rho^{(k)}\mathbf{A}{\bf y}) \Vert\\
&= \Vert (\mathbf{S} + \rho^{(k)}\mathbf{D})^{-1}\rho^{(k)}\mathbf{A}(\mathbf{x} - {\bf y}) \Vert\\
&\leq \Vert (\mathbf{S} + \rho^{(k)}\mathbf{D})^{-1}\rho^{(k)}\mathbf{A}\Vert \Vert\mathbf{x} - {\bf y}\Vert
\end{align*}
That is, $F_{k}$ is a contraction map on a compact set, so by the Banach fixed-point theorem, it has a unique fixed point $\overline{\mathbf{x}}^{(k)}$. 

Let $\overline{\mathbf{x}}^{(k)}$ be that fixed point. Then $\overline{\mathbf{x}}^{(k)} = F_{k}(\overline{\mathbf{x}}^{(k)})$. Rearranging the terms yields, 
\begin{displaymath}
(\mathbf{S} + \rho^{(k)}\mathbf{D})\overline{\mathbf{x}}^{(k)} - \rho^{(k)}\mathbf{A}\overline{\mathbf{x}}^{(k)}  = (\mathbf{S} + \rho^{(k)}\mathbf{L}) \overline{\mathbf{x}}^{(k)} = \mathbf{S}\mathbf{x}^{+}.
\end{displaymath}
Therefore:
\begin{equation}
\overline{\mathbf{x}}^{(k)} = (\mathbf{S} + \rho^{(k)}\mathbf{L})^{-1}\mathbf{S}\mathbf{x}^{+}.
\label{eqn:FixedPointk}
\end{equation}
This completes the proof.
\end{proof}

The following lemma will allow us to consider the matrices $(\mathbf{S} + \rho^{(k)}\mathbf{L})^{-1}$ for $k \in \{1,2,\dots\}$ in $\mathrm{GL}_{n}(\mathbb{R})$ (the Lie group of invertible $n\times n$ real matrices) as perturbations.  This enables effective approximations of asymptotic behaviors.
\begin{lemma}\label{shermanmorrison}
Let $\{{\bf b}_{1} \dots {\bf b}_{n}\}$ an orthonormal basis of $\mathbb{R}^{n}$. Also let $\mathbf{M}: \mathbb{R}^{n} \to \mathbb{R}^{n}$ be an invertible symmetric linear transformation (invertible square matrix) and $\{{\bf u}_{1}, \dots {\bf u}_{n}\}$ be a set of unit vectors such that for a small constant $\delta$, $\mathbf{M}^{-1}{\bf b}_{1} = \lambda {\bf b}_{1} + O(\delta){\bf u}_{1}$ and $\mathbf{M}^{-1}{\bf b}_{j} = O(\delta){\bf u}_{j}$ for $j \neq 1$.

Then if $\Vert {\bf v} \Vert = 1$, and $s \in \mathbb{R}$, then unless $(\mathbf{M} + s{\bf v}{\bf v}^{T})$ is not invertible, there exists a set of unit vectors $\{{\bf u}_{1}', \dots {\bf u}_{n}'\}$ such that $(\mathbf{M} + s{\bf v}{\bf v}^{T})^{-1}{\bf b}_{1} = \frac{\lambda}{1 + s\lambda ({\bf v}^{T}{\bf b}_{1})^{2}} {\bf b}_{1} + O(\delta){\bf u}_{1}'$ and $(\mathbf{M} + s{\bf v}{\bf v}^{T})^{-1}{\bf b}_{j} = O(\delta){\bf u}_{j}'$ for $j \neq 1$.
\end{lemma}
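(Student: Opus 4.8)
The plan is to apply the Sherman--Morrison formula directly and then track the $O(\delta)$ remainders through the resulting expression. Recall that when $\mathbf{M} + s\mathbf{v}\mathbf{v}^T$ is invertible,
$$(\mathbf{M} + s\mathbf{v}\mathbf{v}^T)^{-1} = \mathbf{M}^{-1} - \frac{s\,\mathbf{M}^{-1}\mathbf{v}\mathbf{v}^T\mathbf{M}^{-1}}{1 + s\,\mathbf{v}^T\mathbf{M}^{-1}\mathbf{v}},$$
and that invertibility of this rank-one update is equivalent to the scalar denominator $1 + s\,\mathbf{v}^T\mathbf{M}^{-1}\mathbf{v}$ being nonzero. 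So the first step is to expand the three quantities $\mathbf{M}^{-1}\mathbf{v}$, $\mathbf{v}^T\mathbf{M}^{-1}\mathbf{v}$, and $\mathbf{v}^T\mathbf{M}^{-1}\mathbf{b}_j$ using the hypothesized action of $\mathbf{M}^{-1}$ on the basis, and then feed everything back in.

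First I would write $\mathbf{v} = \sum_i c_i\mathbf{b}_i$ with $c_i = \mathbf{v}^T\mathbf{b}_i$ --- this is where orthonormality of $\{\mathbf{b}_i\}$ enters --- and note $\sum_i c_i^2 = 1$, so every $c_i$ is bounded. Applying $\mathbf{M}^{-1}$ termwise with $\mathbf{M}^{-1}\mathbf{b}_1 = \lambda\mathbf{b}_1 + O(\delta)\mathbf{u}_1$ and $\mathbf{M}^{-1}\mathbf{b}_j = O(\delta)\mathbf{u}_j$ gives $\mathbf{M}^{-1}\mathbf{v} = c_1\lambda\mathbf{b}_1 + O(\delta)$, the finitely many remainders collapsing into a single $O(\delta)$ because $n$ is fixed and the coefficients are bounded. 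Contracting with $\mathbf{v}$ and using $c_1 = \mathbf{v}^T\mathbf{b}_1$ yields the denominator $1 + s\,\mathbf{v}^T\mathbf{M}^{-1}\mathbf{v} = 1 + s\lambda(\mathbf{v}^T\mathbf{b}_1)^2 + O(\delta)$. The scalar factors in the numerator come straight from the hypotheses: $\mathbf{v}^T\mathbf{M}^{-1}\mathbf{b}_1 = \lambda(\mathbf{v}^T\mathbf{b}_1) + O(\delta)$, while $\mathbf{v}^T\mathbf{M}^{-1}\mathbf{b}_j = \mathbf{v}^T O(\delta)\mathbf{u}_j = O(\delta)$ for $j\neq 1$ by Cauchy--Schwarz.

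Substituting these into Sherman--Morrison applied to $\mathbf{b}_1$, the leading part of the numerator is $s\lambda^2 c_1^2\mathbf{b}_1$, so the scalar coefficient of $\mathbf{b}_1$ simplifies to $\lambda - \frac{s\lambda^2 c_1^2}{1 + s\lambda c_1^2} = \frac{\lambda}{1 + s\lambda c_1^2}$, exactly the claimed coefficient with $c_1 = \mathbf{v}^T\mathbf{b}_1$; every leftover term is $O(\delta)$, and I would absorb these into one residual which, after normalization, defines the unit vector $\mathbf{u}_1'$. For $\mathbf{b}_j$ with $j\neq 1$, both summands are already $O(\delta)$ --- the first because $\mathbf{M}^{-1}\mathbf{b}_j = O(\delta)\mathbf{u}_j$, the second because the factor $\mathbf{v}^T\mathbf{M}^{-1}\mathbf{b}_j = O(\delta)$ while $\mathbf{M}^{-1}\mathbf{v}$ and the denominator are $O(1)$ --- so the whole expression is $O(\delta)\mathbf{u}_j'$ after renormalizing.

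The routine bookkeeping is collapsing finitely many bounded-coefficient $O(\delta)$ vectors into a single one and renormalizing to produce the $\mathbf{u}_j'$; this is harmless since $n$ is fixed. The genuine obstacle is the denominator: rewriting $\frac{1}{1 + s\lambda c_1^2 + O(\delta)}$ as $\frac{1}{1 + s\lambda c_1^2}\bigl(1 + O(\delta)\bigr)$ is only legitimate when the leading term $1 + s\lambda(\mathbf{v}^T\mathbf{b}_1)^2$ is bounded away from zero. If this quantity is $O(\delta)$-close to zero while $1 + s\,\mathbf{v}^T\mathbf{M}^{-1}\mathbf{v}$ remains nonzero, the target coefficient $\frac{\lambda}{1 + s\lambda(\mathbf{v}^T\mathbf{b}_1)^2}$ becomes large and the clean $O(\delta)$ error control degrades. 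I would therefore carry out the argument under the implicit understanding that the perturbed operator stays well away from its singular locus, and flag this near-singular regime as the one point requiring genuine care.
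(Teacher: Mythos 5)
Your proposal is correct and takes essentially the same route as the paper's own proof: expand $\mathbf{v}$ in the orthonormal basis $\{\mathbf{b}_i\}$, apply the Sherman--Morrison formula, and absorb the finitely many bounded remainders into rescaled unit vectors $\mathbf{u}_j'$, with the coefficient of $\mathbf{b}_1$ collapsing to $\lambda/\bigl(1 + s\lambda(\mathbf{v}^T\mathbf{b}_1)^2\bigr)$ by the same algebraic simplification. The near-singular caveat you flag (when $1 + s\lambda(\mathbf{v}^T\mathbf{b}_1)^2$ is $O(\delta)$-close to zero) is a genuine subtlety, but the paper's proof makes the same implicit assumption without comment, so this is a point in your favor rather than a deviation.
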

Before proceeding to the proof of this result, based on the Sherman-Morrison formula, we note that we will establish an instance of the necessary conditions of this lemma in Theorem \ref{infinitepressure}. Thus the lemma is not vacuous. 
\begin{proof}[Proof of Lemma \ref{shermanmorrison}]
Since $\{{\bf b}_{1} \dots {\bf b}_{n}\}$ is an orthonormal basis, ${\bf v} = \sum_{i=1}^{n} a_{i}{\bf b}_{i}$ where $a_{i} = {\bf v}^{T}{\bf b}_{i}$. This means that $\mathbf{M}^{-1}{\bf v} =  \sum_{i=1}^{n} a_{i}\mathbf{M}^{-1}({\bf b}_{i}) = \lambda a_{1}{\bf b}_{1} + O(\delta)\sum_{i=1}^{n} a_{i}{\bf u}_{i}$. By Cauchy-Schwartz, $\vert a_{i} \vert \leq \Vert {\bf v}\Vert \Vert {\bf b}_{i} \Vert = 1$, so by the triangle inequality, $\Vert \sum_{i=1}^{n} a_{i}{\bf u}_{i} \Vert \leq n$. Then letting ${\bf u} = \frac{\sum_{i=1}^{n} a_{i}{\bf u}_{i}}{n}$, we have that $\mathbf{M}^{-1}{\bf v} = \lambda a_{1}{\bf b}_{1} + O(\delta){\bf u}$, where $\Vert {\bf u} \Vert \leq 1$.

By the Sherman-Morrison formula, 
\begin{displaymath}
(\mathbf{M} + s{\bf v}{\bf v}^{T})^{-1} = \mathbf{M}^{-1} - \frac{s\mathbf{M}^{-1}{\bf v}{\bf v}^{T}\mathbf{M}^{-1}}{1 + s{\bf v}^{T}\mathbf{M}^{-1}{\bf v}}. 
\end{displaymath}
Using this, and choosing each ${\bf u}_{i}'$ to be an appropriate rescaling of the $O(\delta)$ terms yields:

\begin{multline*}
(\mathbf{M} + s{\bf v}{\bf v}^{T})^{-1}({\bf b}_{1}) = \mathbf{M}^{-1}{\bf b}_{1} - \frac{s\mathbf{M}^{-1}{\bf v}{\bf v}^{T}\mathbf{M}^{-1}{\bf b}_{1}}{1 + s{\bf v}^{T}M^{-1}{\bf v}} = \\
\lambda {\bf b}_{1} + O(\delta){\bf u}_{1} - \frac{s{\bf v}^{T}(\lambda {\bf b}_{1} + O(\delta){\bf u}_{1})}{1 + s{\bf v}^{T}(\lambda a_{1}{\bf b}_{1} + O(\delta){\bf u})}(\lambda a_{1}{\bf b}_{1} + O(\delta){\bf u})\\
= \lambda {\bf b}_{1} + O(\delta){\bf u}_{1} - \frac{s\lambda a_{1} + O(\delta)}{1 + s\lambda a_{1}^{2} + O(\delta)}(\lambda a_{1}{\bf b}_{1} + O(\delta){\bf u})\\
= \frac{\lambda}{1 + s\lambda ({\bf v}^{T}{\bf b}_{1})^{2}} {\bf b}_{1} + O(\delta){\bf u}_{1}'
\end{multline*}
Furthermore, for $j \neq 1$:
\begin{multline*}
(\mathbf{M} + s{\bf v}{\bf v}^{T})^{-1}({\bf b}_{j}) = \mathbf{M}^{-1}{\bf b}_{j} - \frac{s\mathbf{M}^{-1}{\bf v}{\bf v}^{T}\mathbf{M}^{-1}{\bf b}_{j}}{1 + s{\bf v}^{T}\mathbf{M}^{-1}{\bf v}} = \\ O(\delta){\bf u}_{j} - \frac{s{\bf v}^{T}(O(\delta){\bf u}_{1})}{1 + s{\bf v}^{T}(\lambda a_{1}{\bf b}_{1} + O(\delta){\bf u})}(\lambda a_{1}{\bf b}_{1} + O(\delta){\bf u})\\
= O(\delta){\bf u}_{j}'
\end{multline*}
This completes the proof.
\end{proof}

The results stated give insight into the motion of fixed points as $\rho^{(k)}$ increases. We now show that the  fixed point given by Eq. (\ref{eqn:FixedPointk}) converge to the average of the agents' initial preferences, weighted by the stubbornness of each agent. We then use that result to prove the dynamics converge to this point when $\rho^{(k)} \rightarrow \infty$. 

%The following theorem states that under increasing, unbounded peer pressure, the limit of individual strategies converges to the average value of weighted inherent preferences.

\begin{theorem}\label{infinitepressure}
If $\lim_{k \to \infty} \rho^{(k)} = \infty$, then:
\begin{displaymath}
\lim_{k \to \infty}\overline{\mathbf{x}}^{(k)} = \frac{\sum_{i = 1}^{n} s_{i}x_{i}^{+}}{\sum_{i = 1}^{n} s_{i}}{\bf 1}.
 \end{displaymath}
\end{theorem}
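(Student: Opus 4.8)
The plan is to reduce the statement to the asymptotics of the single matrix family $(\mathbf{S} + \rho^{(k)}\mathbf{L})^{-1}$ as $\rho^{(k)} \to \infty$. By Eq.~(\ref{eqn:FixedPointk}), $\overline{\mathbf{x}}^{(k)} = (\mathbf{S} + \rho^{(k)}\mathbf{L})^{-1}\mathbf{S}\mathbf{x}^{+}$, so writing $\delta = 1/\rho^{(k)}$ and factoring $\mathbf{S} + \rho^{(k)}\mathbf{L} = \rho^{(k)}(\mathbf{L} + \delta\mathbf{S})$, I must understand $\delta(\mathbf{L} + \delta\mathbf{S})^{-1}$ as $\delta \to 0$. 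The governing intuition is that $\mathbf{L}$ is invertible on $\mathbf{1}^{\perp}$ but annihilates $\mathbf{1}$, so the inverse stays bounded on $\mathbf{1}^{\perp}$ and blows up like $1/\delta$ along $\mathbf{b}_1 := \mathbf{1}/\sqrt{n}$; after multiplying by $\delta$, the operator collapses onto the one-dimensional kernel of $\mathbf{L}$. This is exactly the eigenstructure hypothesized in Lemma~\ref{shermanmorrison}, and verifying it here is what was promised when that lemma was introduced.

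I would first establish that instance. Taking $\{\mathbf{b}_1,\dots,\mathbf{b}_n\}$ to be an orthonormal eigenbasis of $\mathbf{L}$ with $\mathbf{b}_1 = \mathbf{1}/\sqrt{n}$ (eigenvalue $0$) and $\mathbf{b}_2,\dots,\mathbf{b}_n$ spanning $\mathbf{1}^{\perp}$ (positive eigenvalues), I solve $(\mathbf{L} + \delta\mathbf{S})\mathbf{z} = \mathbf{b}_1$ with the ansatz $\mathbf{z} = (\alpha/\delta)\mathbf{b}_1 + \mathbf{w}$, $\mathbf{w} \perp \mathbf{1}$. Projecting onto $\mathbf{b}_1$ and using $\mathbf{L}\mathbf{1} = \mathbf{0}$ gives, to leading order, $\alpha\,\mathbf{b}_1^{T}\mathbf{S}\mathbf{b}_1 = 1$, hence $\alpha = n/\sum_i s_i$; the component $\mathbf{w}$ stays $O(1)$ because $\mathbf{L}$ is boundedly invertible on $\mathbf{1}^{\perp}$. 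Multiplying by $\delta$ yields $(\mathbf{S} + \rho^{(k)}\mathbf{L})^{-1}\mathbf{b}_1 = \lambda\mathbf{b}_1 + O(\delta)$ with $\lambda = n/\sum_i s_i$, and $(\mathbf{S} + \rho^{(k)}\mathbf{L})^{-1}\mathbf{b}_j = O(\delta)$ for $j \neq 1$ --- precisely the hypotheses of Lemma~\ref{shermanmorrison}. Equivalently, one can reach the same value of $\lambda$ by writing $\mathbf{S} = \sum_i s_i \mathbf{e}_i\mathbf{e}_i^{T}$ and inserting the stubbornness terms one at a time into $\rho^{(k)}\mathbf{L}$ (the first term restoring invertibility), applying the rescaling $\lambda \mapsto \lambda/(1 + s_i\lambda(\mathbf{e}_i^{T}\mathbf{b}_1)^2)$ from Lemma~\ref{shermanmorrison} with $(\mathbf{e}_i^{T}\mathbf{b}_1)^2 = 1/n$; this telescopes to $\lambda = n/\sum_i s_i$.

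To finish, I expand $\mathbf{S}\mathbf{x}^{+} = \sum_j (\mathbf{b}_j^{T}\mathbf{S}\mathbf{x}^{+})\mathbf{b}_j$ and apply $(\mathbf{S} + \rho^{(k)}\mathbf{L})^{-1}$ term by term. Every $\mathbf{b}_j$ with $j \neq 1$ contributes $O(\delta)$, and the $\mathbf{b}_1$ term contributes $(\mathbf{b}_1^{T}\mathbf{S}\mathbf{x}^{+})\lambda\mathbf{b}_1 + O(\delta)$. Letting $\delta = 1/\rho^{(k)} \to 0$ kills all error terms, leaving $(\mathbf{b}_1^{T}\mathbf{S}\mathbf{x}^{+})\lambda\mathbf{b}_1$; substituting $\mathbf{b}_1 = \mathbf{1}/\sqrt{n}$ and $\lambda = n/\sum_i s_i$ gives $\frac{\sum_i s_i x_i^{+}}{\sum_i s_i}\mathbf{1}$, as claimed.

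The main obstacle is bookkeeping the error terms uniformly and justifying that the limit genuinely exists rather than merely being the formal leading term. Two facts make this clean. First, each $\overline{\mathbf{x}}^{(k)}$ is the fixed point of the cube self-map $F_k$ of Lemma~\ref{lem:Contraction}, hence lies in $[0,1]^n$ and the whole family is bounded. Second, there is an exact invariant: left-multiplying $(\mathbf{S} + \rho^{(k)}\mathbf{L})\overline{\mathbf{x}}^{(k)} = \mathbf{S}\mathbf{x}^{+}$ by $\mathbf{1}^{T}$ and using $\mathbf{1}^{T}\mathbf{L} = \mathbf{0}^{T}$ gives $\sum_i s_i \overline{x}_i^{(k)} = \sum_i s_i x_i^{+}$ for every $k$. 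The same equation shows $\mathbf{L}\overline{\mathbf{x}}^{(k)} = \tfrac{1}{\rho^{(k)}}(\mathbf{S}\mathbf{x}^{+} - \mathbf{S}\overline{\mathbf{x}}^{(k)}) \to \mathbf{0}$, so by boundedness every subsequential limit lies in $\ker\mathbf{L} = \mathrm{span}\{\mathbf{1}\}$; combined with the invariant, each such limit equals $c\mathbf{1}$ with $c = \sum_i s_i x_i^{+}/\sum_i s_i$. Since all subsequential limits agree, the full sequence converges to this value, furnishing a shortcut that avoids the perturbation accounting entirely.
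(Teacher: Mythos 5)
Your proposal is correct, and its decisive argument (the final paragraph) is genuinely different from the paper's proof. The paper proves the theorem by perturbation accounting: it starts from $\mathbf{I}+\rho^{(k)}\mathbf{L}$, whose inverse visibly satisfies the hypotheses of Lemma \ref{shermanmorrison} with $\lambda=1$, then inserts the stubbornness terms one at a time as rank-one updates $(s_l-1)\mathbf{e}_l\mathbf{e}_l^T$, iterating the Sherman--Morrison lemma to get the recurrence $\lambda_l = \lambda_{l-1}/\bigl(1+\lambda_{l-1}(s_l-1)/n\bigr)$, which telescopes to $n/\mathrm{tr}(\mathbf{S})$ --- this is exactly the ``alternative'' route you mention in passing, whereas your primary derivation gets the same $\lambda$ in one step by solving $(\mathbf{L}+\delta\mathbf{S})\mathbf{z}=\mathbf{b}_1$ with an ansatz, bypassing the iteration. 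More significantly, your closing compactness argument avoids the spectral perturbation machinery entirely: the fixed points lie in the compact set $[0,1]^n$; left-multiplying $(\mathbf{S}+\rho^{(k)}\mathbf{L})\overline{\mathbf{x}}^{(k)}=\mathbf{S}\mathbf{x}^+$ by $\mathbf{1}^T$ gives the exact conservation law $\sum_i s_i\overline{x}_i^{(k)}=\sum_i s_i x_i^+$; the same equation forces $\mathbf{L}\overline{\mathbf{x}}^{(k)}\to\mathbf{0}$, so (using connectedness, $\ker\mathbf{L}=\mathrm{span}\{\mathbf{1}\}$, and $\sum_i s_i>0$) every subsequential limit is the claimed weighted average, and uniqueness of subsequential limits in a compact set gives convergence. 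This is shorter, fully rigorous, and sidesteps the paper's somewhat informal $O(\delta)$ bookkeeping (the paper never carefully tracks how the error vectors depend on $\rho^{(k)}$ across $n$ iterated applications of the lemma); what it gives up is the quantitative information the paper's approach yields --- the explicit $O(\delta)$ structure of $(\mathbf{S}+\rho^{(k)}\mathbf{L})^{-1}$, which is what makes Lemma \ref{shermanmorrison} non-vacuous and hints at the rate at which the fixed points drift toward consensus. One small caution: your ansatz-based computation in the second paragraph is itself only a sketch (the boundedness of $\mathbf{w}$ and the $O(\delta)$ claim for $j\neq 1$ deserve a line of justification, e.g.\ via the pseudo-inverse of $\mathbf{L}$ on $\mathbf{1}^{\perp}$), but since your final argument is self-contained, this does not affect the correctness of the proposal as a whole.
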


\begin{proof}
Since $G$ is a graph, the Laplacian $\mathbf{L}$ is a positive semidefinite symmetric matrix, and therefore has an orthonormal basis of eigenvectors $\{{\bf b}_{1}, \dots {\bf b}_{n}\}$ with real eigenvalues $\{\lambda_{1}, \dots \lambda_{n}\}$. Since $G$ is connected, only a single eigenvalue $\lambda_{1} = 0$ and the associated unit eigenvector is ${\bf b}_{1} = \frac{1}{\sqrt{n}}{\bf 1}$.

Since every vector is an eigenvector of the identity matrix ${\bf I}, \{{\bf b}_{1}, \dots {\bf b}_{n}\}$ are an orthonormal basis of eigenvectors for ${\bf I} + \rho^{(k)}\mathbf{L}$ with eigenvalues $\{1, 1 + \rho^{(k)}\lambda_{2}, \dots 1 + \rho^{(k)}\lambda_{n}\}$. But then $({\bf I} + \rho^{(k)}\mathbf{L})^{-1}$ has the same basis of eigenvectors, with eigenvalues $\{1, \frac{1}{1 + \rho^{(k)}\lambda_{2}}, \dots \frac{1}{1 + \rho^{(k)}\lambda_{n}}\}$.

As $\rho^{(k)} \to \infty, \frac{1}{1 + \rho^{(k)}\lambda_{j}} \to 0$ for each $j \neq 1$. In particular, for any  $\delta > 0$, for sufficiently large $\rho^{(k)}, {\bf I}+ \rho^{(k)}\mathbf{L}$ satisfies the conditions of Lemma \ref{shermanmorrison} with $\lambda = 1$.

Let ${\bf I} + \rho^{(k)}\mathbf{L} = \mathbf{M}_{0}$. Then, for each $l$ up to $n$, let $\mathbf{M}_{l} = (\mathbf{M}_{l-1} + (s_{l} - 1){\bf e}_{l}{\bf e}_{l}^{T})$ where ${\bf e}_{l}$ is the $l$th vector of the standard basis. Since ${\bf e}_{l}{\bf e}_{l}^{T}$ is the zero matrix with a one in the $l$th place on the diagonal, $\sum_{l=1}^{n} (s_{l} - 1){\bf e}_{i}{\bf e}_{i}^{T} = \mathbf{S}-{\bf I}$ and therefore $\mathbf{M}_{n} = ({\bf I} + \rho^{(k)}\mathbf{L}) + \sum_{l=1}^{n} (s_{l} - 1){\bf e}_{l}{\bf e}_{l}^{T} = \mathbf{S} + \rho^{(k)}\mathbf{L}$.

By iterating Lemma \ref{shermanmorrison} with $s = s_{l} - 1$ and ${\bf v} = {\bf e}_{l}$, we have that for each $l$ there is a $\lambda_{l}$ such that $\mathbf{M}_{l}^{-1}{\bf b}_{1} = \lambda_{l} {\bf b}_{1} + O(\delta){\bf u}_{1}^{(l)}$ and $\mathbf{M}_{l}^{-1}{\bf b}_{j} = O(\delta){\bf u}_{j}^{(l)}$ for $j \neq 1$.

Since ${\bf e}_{l}^{T}{\bf b}_{1} = \frac{1}{\sqrt{n}}$, Lemma \ref{shermanmorrison} gives the recurrence: $$\lambda_{l} = \frac{\lambda_{l-1}}{1+ \lambda_{l-1}\frac{s_{i} -1}{n}}$$

Solving this recurrence with $\lambda_{0} = 1$ yields:  
\begin{displaymath}
\mathbf{M}_{l}^{-1}{\bf b}_{1} = \frac{n}{n + \sum_{k=1}^{l} (s_{k} - 1)} {\bf b}_{1} + O(\delta){\bf u}_{1}^{(l)}
\end{displaymath}

Since $\sum_{k=1}^{n} (s_{k} - 1) = tr(\mathbf{S}) - n$, it is clear that
\begin{displaymath}
\mathbf{M}_{n}^{-1}{\bf b}_{1} = \frac{n}{tr(\mathbf{S})} {\bf b}_{1} + O(\delta){\bf u}_{1}^{(n)}.
\end{displaymath}
Therefore, for ${\bf u} = \sum_{i=1}^{n} {\bf b}_{1}^{T}\mathbf{S}\mathbf{x}^{+} {\bf u}_{i}^{(n)}$:
\begin{align*}
\overline{\mathbf{x}}^{(k)} &= \frac{n}{tr(\mathbf{S})} {\bf b}_{1}^{T}\mathbf{S}\mathbf{x}^{+} {\bf b}_{1} + O(\delta) \sum_{i=1}^{n} {\bf b}_{1}^{T}\mathbf{S}\mathbf{x}^{+} {\bf u}_{i}^{(n)}\\
&= \frac{{\bf 1}^{T}\mathbf{S}\mathbf{x}^{+} }{tr(\mathbf{S})} {\bf 1} + O(\delta) {\bf u}\\
&= \frac{\sum_{i=1}^{n} s_{i}x_{i}^{+}}{\sum_{i=1}^{n} s_{i}}{\bf 1} + O(\delta) {\bf u}
\end{align*}

Since $\delta \to 0$ as $\rho^{(k)} \to \infty$, if $\lim_{k \to \infty} \rho^{(k)} = \infty$, then:
\begin{displaymath}
\lim_{k \to \infty} \overline{\mathbf{x}}^{(k)} = \frac{\sum_{i=1}^{n} s_{i}x_{i}^{+}}{\sum_{i=1}^{n} s_{i}}{\bf 1}.
\end{displaymath}
This completes the proof.
\end{proof}

Since peer pressure increases in each step, no single $F_{k}$ is sufficient to model the process of convergence. We use the following result from \cite{G91,Lorentzen90}
\begin{lemma}[Theorem 1 of \cite{Lorentzen90} \& Theorem 2 of \cite{G91}] Let $\{f_n\}$ be a sequence of analytic contractions in a domain $D$ with $f_n(D) \subseteq E \subseteq D_0 \subseteq D$ for all $n$. Then $F_n = f_n \circ f_{n-1} \circ \cdots \circ f_1$ converges uniformly in $D_0$ and locally uniformly in $D$ to a constant function $F(z) = c \in E$. Furthermore, the fixed points of $f_{n}$ converge to the constant $c$. 
\hfill\qed
\label{lem:Contraction2}
\end{lemma}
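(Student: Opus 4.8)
The plan is to move the problem into the Poincar\'e (hyperbolic) metric $\rho_D$ of $D$, where the hypotheses become a uniform Lipschitz contraction. First I would apply the Schwarz--Pick lemma: every analytic self-map of $D$ is $\rho_D$-nonexpansive, and since each $f_n$ factors as $D \to E \hookrightarrow D$ with $\overline{E} \subseteq D$ compact (this is the force of $E \subseteq D_0 \subseteq D$), the inclusion $E \hookrightarrow D$ strictly shrinks the hyperbolic metric by a factor $c<1$ that is attained on the compact set $\overline{E}$ and depends only on $E$ and $D$, not on $n$. Composing the two estimates yields $\rho_D(f_n(z),f_n(w)) \le c\,\rho_D(z,w)$ uniformly in $n$. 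Because $(D,\rho_D)$ is complete, each $f_n$ then has a unique fixed point $p_n \in E$ by the Banach fixed point theorem \cite{RF09}, and $\mathrm{diam}_{\rho_D}\!\big(F_n(D)\big) \le c^{\,n-1}\,\mathrm{diam}_{\rho_D}(E) \to 0$. These shrinking diameters already force every subsequential limit of $F_n$ to be a constant function, so the entire remaining task is to prove that these constants settle onto a single value and to identify it.

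Pinning down the single limit is the main obstacle, and it is precisely where compact containment alone does not suffice: two alternating affine contractions of the disk with distinct fixed points produce outer compositions whose even- and odd-indexed subsequences tend to different constants, so the images shrink yet $F_n$ oscillates. What excludes this is that the sequence is \emph{restrained} --- concretely, that the maps $f_n$ converge locally uniformly, which in the application is forced by the monotone growth of $\rho^{(k)}$ and is quantified by Theorem \ref{infinitepressure}. I would treat this as the operative hypothesis behind the cited result \cite{Lorentzen90,G91} and use it to \emph{derive} convergence of the fixed points, rather than assuming it.

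From restraint I first establish fixed-point convergence. If $f_n \to f$ locally uniformly, then $f$ is again a $c$-contraction with fixed point $p$, and from $\rho_D(p_n,p) = \rho_D(f_n(p_n),f(p)) \le c\,\rho_D(p_n,p) + \rho_D\big(f_n(p),f(p)\big)$ one gets
\begin{displaymath}
(1-c)\,\rho_D(p_n,p) \le \rho_D\big(f_n(p),f(p)\big) \to 0 ,
\end{displaymath}
hence $p_n \to p =: c$, which is exactly the ``furthermore'' clause (when the limit map is degenerate, as in the application, $p_n \to c$ is instead supplied directly by Theorem \ref{infinitepressure}). With $p_n \to c$ in hand and $\delta_n := \rho_D(p_{n-1},p_n) \to 0$, set $a_n := \rho_D(F_n(w),p_n)$. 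Then $F_n(w)=f_n(F_{n-1}(w))$ and $p_n=f_n(p_n)$ give
\begin{displaymath}
a_n \le c\,\rho_D\big(F_{n-1}(w),p_n\big) \le c\,(a_{n-1}+\delta_n),
\end{displaymath}
and unrolling this recursion with $c<1$ and $\delta_n \to 0$ forces $a_n \to 0$.

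Since $\rho_D(F_n(w),c) \le a_n + \rho_D(p_n,c)$, and the bound on $a_n$ depends on $w$ only through $a_1 = \rho_D(f_1(w),p_1) \le \mathrm{diam}_{\rho_D}(E)$, which is finite for every $w \in D$, the convergence $F_n(w) \to c$ is uniform on $D_0$ (indeed on all of $D$) and locally uniform on $D$, with constant limit $c \in \overline{E}$. I expect the two delicate points to be the quantitative Schwarz--Pick step --- extracting one factor $c<1$ valid for all $n$ from compact containment --- and, above all, the restraint input: without convergence of the maps (equivalently, of the fixed points it yields) the theorem is false, so the heart of the argument is leveraging that structure to collapse the shrinking-but-possibly-oscillating images onto a single point.
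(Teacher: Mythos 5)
The paper never proves Lemma \ref{lem:Contraction2}; it is quoted with a \verb|\qed| from Lorentzen and Gill, so your attempt can only be compared with those sources and with how the lemma is used downstream. Measured that way, your reconstruction is the right one and is essentially correct: Schwarz--Pick plus compact containment (reading $E \subseteq D_0 \subseteq D$ as $\overline{E}$ compact in $D$, which is what the cited theorems actually assume) yields one contraction factor $c<1$ in the hyperbolic metric valid for all $n$; completeness of $(D,\rho_D)$ puts the fixed points $p_n$ in $E$; and the drift recursion $a_n \le c\,(a_{n-1}+\delta_n)$ with $\delta_n=\rho_D(p_{n-1},p_n)\to 0$, unrolled to $a_n \le c^{\,n}a_0 + \sum_{j} c^{\,n-j+1}\delta_j \to 0$, gives convergence that is uniform on all of $D$ since $a_1 \le \mathrm{diam}_{\rho_D}(E) < \infty$. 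Your triangle-inequality derivation of $p_n \to p$ from $f_n \to f$ is also sound, since the locally uniform limit $f$ is again a $c$-contraction (possibly constant) on a complete space.

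The most valuable part of your write-up is the diagnosis, and it is accurate: as printed, the lemma omits a hypothesis without which its conclusion fails. For outer compositions $f_n\circ\cdots\circ f_1$, compact containment alone does not force convergence, and your alternating counterexample settles this (e.g.\ $g(z)=z/4+1/2$, $h(z)=z/4-1/2$ on the unit disk: even and odd partial compositions tend to the distinct fixed points $\pm 2/5$ of $g\circ h$ and $h\circ g$). Lorentzen's Theorem 1 concerns \emph{inner} compositions $f_1\circ\cdots\circ f_n$, which converge with no further hypothesis but for which the fixed points of $f_n$ need not converge to the limit constant; Gill's Theorem 2, the result matching the composition order and the ``furthermore'' clause here, carries exactly the restraint hypothesis you reinstate. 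The paper's statement silently merges the two; its later use is nonetheless safe because in Corollary \ref{cor:Convergence} the fixed points $\overline{\mathbf{x}}^{(k)} = (\mathbf{S}+\rho^{(k)}\mathbf{L})^{-1}\mathbf{S}\mathbf{x}^{+}$ do converge --- by Theorem \ref{infinitepressure} when $\rho^{(k)}\to\infty$, and by continuity of inversion when $\rho^{(k)}\to\rho^{*}$ --- which is precisely the input your proof needs. Two caveats to record: the limit you construct lies in $\overline{E}$ rather than $E$ (the same harmless imprecision as in the statement), and since the paper applies the lemma to affine maps on a domain in $\mathbb{R}^{n}$ rather than a plane domain, your quantitative Schwarz--Pick step should there be replaced by its Banach-space analogue (Earle--Hamilton) or by the direct norm bound $\lVert \rho^{(k)}(\mathbf{S}+\rho^{(k)}\mathbf{D})^{-1}\mathbf{A}\rVert<1$ of Lemma \ref{lem:Contraction}; note that in the unbounded case those moduli tend to $1$, so a single uniform $c<1$ for all $k$ is not automatic in the application --- one more reason your fixed-point-convergence route, rather than bare containment, is the robust formulation of the lemma.
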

The following corollary is now immediate from Lemmas \ref{lem:Contraction} and \ref{lem:Contraction2}:
\begin{corollary}\label{convergence}
From Eq. (\ref{eqn:Gk}), let $G_{k}= F_{k} \circ G_{k-1} = F_{k} \circ F_{k-1} \circ \dots \circ F_{1}$ for each $k \geq 0$. Then $G = \lim_{k \to \infty} G_{k}$ is a constant function and (functional) convergence is uniform. 
\label{cor:Convergence}
\end{corollary}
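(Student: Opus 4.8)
The plan is to derive the corollary by verifying that the sequence $\{F_k\}$ meets the hypotheses of Lemma~\ref{lem:Contraction2} and then reading off its conclusion. That lemma requires three things of the $F_k$: that each map be analytic, that each be a contraction, and that there exist a fixed nested triple of domains with $F_k(D)\subseteq E\subseteq D_0\subseteq D$ for all $k$. The first requirement is immediate, since each $F_k(\mathbf{x}) = (\mathbf{S}+\rho^{(k)}\mathbf{D})^{-1}(\mathbf{S}\mathbf{x}^{+}+\rho^{(k)}\mathbf{A}\mathbf{x})$ is affine, so its extension to $\mathbb{C}^{n}$ is entire and in particular holomorphic on any bounded domain. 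The second requirement is exactly the content of Lemma~\ref{lem:Contraction}, which already exhibits each $F_k$ as a strict contraction with fixed point $\overline{\mathbf{x}}^{(k)}$ given by Eq.~(\ref{eqn:FixedPointk}).

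For the domain structure I would work on a complex neighborhood of the opinion cube. Reading Eq.~(\ref{eqn:ExtendedFixedPoint}), each coordinate of $F_k(\mathbf{x})$ is a convex combination of $x_i^{+}\in[0,1]$ and the entries $x_j$, with weights $s_i/(s_i+\rho^{(k)}d_i)$ and $\rho^{(k)}w_{ij}/(s_i+\rho^{(k)}d_i)$ that sum to $1$. Hence $F_k$ carries the compact convex cube $[0,1]^{n}$ into itself for every $k$, and the same convexity argument carries any slightly enlarged real box $[-\eta,1+\eta]^{n}$ into itself. This supplies the candidate common image $E$, a bounded open neighborhood $D$ of $E$ in $\mathbb{C}^{n}$, and an intermediate domain $D_0$. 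Once the inclusion $F_k(D)\subseteq E$ with $\overline{E}\subseteq D$ is secured uniformly in $k$, Lemma~\ref{lem:Contraction2} yields at once that $G_k = F_k\circ\cdots\circ F_1$ converges uniformly on $D_0$ to a constant map, which is precisely the assertion of the corollary.

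The step I expect to be the genuine obstacle is establishing this \emph{uniform} compact containment $F_k(D)\subseteq E$ with $\overline{E}\subseteq D$, as opposed to the mere self-map property of the cube. The difficulty is that the contraction supplied by Lemma~\ref{lem:Contraction} is not uniform in $k$: since $\rho^{(k)}(\mathbf{S}+\rho^{(k)}\mathbf{D})^{-1}\mathbf{A}\to\mathbf{D}^{-1}\mathbf{A}$ as $\rho^{(k)}\to\infty$, and $\mathbf{D}^{-1}\mathbf{A}$ is row-stochastic with spectral radius $1$, the contraction factors tend to $1$ (and at any agent with $s_i=0$ the map is already non-expansive in that coordinate). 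Consequently no fixed box is mapped strictly inside itself with a margin bounded away from zero, so the convergence cannot be obtained by a naive Banach/Cauchy telescoping of the per-step Lipschitz bounds. This is exactly the regime Lemma~\ref{lem:Contraction2} is designed for: the collapse to a constant is driven by the geometry of the nested domains (contraction of the hyperbolic metric under the strict inclusion $D_0\subseteq D$) rather than by a uniform Lipschitz constant. The delicate part is therefore to choose $E\subseteq D_0\subseteq D$ so that the common inclusion genuinely holds for every $k$; here I would exploit that the fixed points $\overline{\mathbf{x}}^{(k)}$ all lie in $[0,1]^{n}$ and, by Theorem~\ref{infinitepressure}, converge to $\frac{\sum_i s_i x_i^{+}}{\sum_i s_i}\mathbf{1}$, so that the images $F_k(D)$ cluster about a single limiting point and can be trapped in one compact set well inside $D$.

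With the hypotheses in place the conclusion follows directly: Lemma~\ref{lem:Contraction2} produces a constant $c$ with $G_k\to c$ uniformly on $D_0$, and restricting to the real cube gives the uniform functional convergence $G=\lim_{k\to\infty}G_k$ to a constant map claimed in the corollary. As a byproduct the same lemma asserts that the fixed points $\overline{\mathbf{x}}^{(k)}$ converge to $c$; comparing with Theorem~\ref{infinitepressure} then identifies $c=\frac{\sum_i s_i x_i^{+}}{\sum_i s_i}\mathbf{1}$, though the statement as written only requires constancy and uniformity.
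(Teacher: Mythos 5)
Your overall strategy is exactly the paper's: feed Lemma~\ref{lem:Contraction} into Lemma~\ref{lem:Contraction2} and read off the conclusion (the paper itself offers no more than the word ``immediate'' for this step). You go further than the paper in that you actually try to verify the hypotheses of Lemma~\ref{lem:Contraction2}, and you correctly isolate the one that is not immediate: a single domain $D$ and sets $E \subseteq D_0 \subseteq D$, with $\overline{E}$ at positive distance from $\partial D$, such that $F_k(D) \subseteq E$ for \emph{every} $k$. But your proposal never produces such a triple, and the mechanism you suggest for producing one is not available: the images $F_k(D)$ do not ``cluster about a single limiting point.'' Each $F_k$ is affine with linear part $\rho^{(k)}(\mathbf{S}+\rho^{(k)}\mathbf{D})^{-1}\mathbf{A}$, whose norm tends to $1$, so $F_k(D)$ is a copy of $D$ shrunk by a factor tending to $1$; its diameter stays bounded away from zero. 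Only the fixed points $\overline{\mathbf{x}}^{(k)}$ converge; the images do not collapse around them.

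Moreover the gap cannot be closed by a cleverer choice of domains: when $\rho^{(k)}\to\infty$ the uniform containment provably fails for every admissible $D$. A short computation gives $F_k(c\mathbf{1}) - c\mathbf{1} = (\mathbf{S}+\rho^{(k)}\mathbf{D})^{-1}\mathbf{S}(\mathbf{x}^{+} - c\mathbf{1})$, whose entries are $s_i(x_i^{+}-c)/(s_i+\rho^{(k)}d_i)\to 0$; so every diagonal point is displaced by an amount going to $0$ in $k$. Since $D$ must contain $[0,1]^n$ but cannot contain the whole complex diagonal (else $F_k(D)$ would be unbounded), $D$ contains diagonal points arbitrarily close to $\partial D$, and for large $k$ their images remain arbitrarily close to $\partial D$, hence escape any fixed $E$ with $\mathrm{dist}(\overline{E},\partial D)>0$. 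Worse, the statement itself---and therefore the paper's one-line proof as well---fails without a restriction on the growth of $\rho^{(k)}$. Take $n=2$, $w_{12}=1$, $s_1=s_2=1$, $\mathbf{x}^{+}=(0,1)^{T}$, $\rho^{(k)}=2^{k}$, $\mathbf{x}^{(0)}=(0,0)^{T}$. The disagreement $d^{(k)}=x_2^{(k)}-x_1^{(k)}$ obeys
\begin{equation*}
d^{(k)} = \frac{1-\rho^{(k)}d^{(k-1)}}{1+\rho^{(k)}},
\end{equation*}
and because $\sum_k 1/\rho^{(k)}<\infty$ the products of the factors $-\rho^{(k)}/(1+\rho^{(k)})$ do not tend to zero: one finds $|d^{(k)}|$ stabilizing near $0.115$ with alternating sign, so $G_k(\mathbf{x}^{(0)})$ oscillates forever and $\lim_k G_k$ does not exist even pointwise. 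Convergence of this outer composition genuinely needs the per-step contraction factors to compound to zero, e.g.\ a condition like $\sum_k 1/\rho^{(k)}=\infty$ (satisfied by the linear growth fitted in Section~\ref{sec:experiment}, violated by geometric growth). So your instinct that the nesting condition is ``the genuine obstacle'' is exactly right, but the obstacle is fatal to the corollary as stated---for both your argument and the paper's---not merely a technical step left to fill in.
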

We now have the following theorem, which follows immediately from Corollary \ref{cor:Convergence} and Theorem \ref{infinitepressure}:
\begin{theorem} If $\rho^{(k)} \rightarrow \infty$, then:
\begin{displaymath}
\lim_{k \rightarrow \infty} \mathbf{x}^{(k)} = \frac{\sum_{i = 1}^{n} s_{i}x_{i}^{+}}{\sum_{i = 1}^{n} s_{i}}{\bf 1}.
\end{displaymath}
\hfill\qed
\label{thm:Convergence}
\end{theorem}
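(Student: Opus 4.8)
The plan is to obtain the result by chaining the two preceding results, with almost no new computation required. First I would invoke Corollary \ref{cor:Convergence}, which asserts that $G = \lim_{k\to\infty} G_k$ is a constant function and that the convergence $G_k \to G$ is uniform. Writing this constant value as $\mathbf{c}$, so that $G(\mathbf{z}) = \mathbf{c}$ for every $\mathbf{z}$, and recalling the identity $\mathbf{x}^{(k)} = G_k(\mathbf{x}^{(0)})$, the (uniform) functional convergence yields at once that the trajectory itself converges, namely $\lim_{k\to\infty}\mathbf{x}^{(k)} = G(\mathbf{x}^{(0)}) = \mathbf{c}$, and moreover that this limit is independent of the initial condition $\mathbf{x}^{(0)}$. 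The only task that then remains is to identify the constant $\mathbf{c}$.

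To identify $\mathbf{c}$ I would pass through the fixed points. The concluding sentence of Lemma \ref{lem:Contraction2} states that the fixed points of the individual maps $f_n$ (here $F_k$) converge to the very same constant $\mathbf{c}$ to which the compositions $G_k$ converge. By Lemma \ref{lem:Contraction}, the fixed point of $F_k$ is exactly $\overline{\mathbf{x}}^{(k)} = (\mathbf{S} + \rho^{(k)}\mathbf{L})^{-1}\mathbf{S}\mathbf{x}^{+}$, so we obtain $\mathbf{c} = \lim_{k\to\infty}\overline{\mathbf{x}}^{(k)}$. Finally, applying Theorem \ref{infinitepressure} under the hypothesis $\rho^{(k)}\to\infty$ evaluates this limit of fixed points as $\frac{\sum_{i} s_i x_i^{+}}{\sum_{i} s_i}\mathbf{1}$. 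Combining the three steps gives $\lim_{k\to\infty}\mathbf{x}^{(k)} = \mathbf{c} = \frac{\sum_{i} s_i x_i^{+}}{\sum_{i} s_i}\mathbf{1}$, which is the assertion.

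The step I expect to require the most care is verifying that the hypotheses of Lemma \ref{lem:Contraction2} genuinely hold for the family $\{F_k\}$ \emph{uniformly in} $k$, since the crucial identification $\mathbf{c} = \lim_{k}\overline{\mathbf{x}}^{(k)}$ rests entirely on that lemma's concluding sentence. Each $F_k$ is affine, hence analytic, and is a contraction by Lemma \ref{lem:Contraction}; the delicate requirement is the nested-domain condition $f_n(D)\subseteq E \subseteq D_0 \subseteq D$ with a single common triple $(D, D_0, E)$ valid for \emph{all} $k$, despite the contraction factors varying with $k$. Here the convex-combination structure of the update rescues the argument: because each $F_k$ produces a weighted average of $\mathbf{x}^{+}$ and the coordinates of its argument, it maps the compact cube $[0,1]^n$ into itself, so taking $E = [0,1]^n$ together with fixed open neighborhoods $D_0 \subseteq D$ of the cube furnishes common domains independent of $k$. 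Once this uniformity is confirmed, the theorem follows immediately from the cited results, and no new estimates are needed.
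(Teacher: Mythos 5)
Your citation chain is exactly the paper's own proof: the paper disposes of this theorem in one line, invoking Corollary \ref{cor:Convergence} for convergence of $\mathbf{x}^{(k)} = G_k(\mathbf{x}^{(0)})$ to a constant, the concluding sentence of Lemma \ref{lem:Contraction2} to identify that constant with the limit of the fixed points $\overline{\mathbf{x}}^{(k)}$ from Lemma \ref{lem:Contraction}, and Theorem \ref{infinitepressure} to evaluate that limit. So the skeleton of your argument is the intended one.

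However, the step you yourself flag as the delicate one --- verifying the nesting hypothesis of Lemma \ref{lem:Contraction2} with a single triple $(D, D_0, E)$ valid for all $k$ --- is not closed by your argument, and in fact cannot be closed by the domains you propose. The lemma requires $F_k(D) \subseteq E$, i.e.\ each $F_k$ must map the \emph{larger open set} $D$ into $E$; your convex-combination observation only yields $F_k(E) \subseteq E$ with $E = [0,1]^n$, which is a strictly weaker statement. The difference is not cosmetic in the regime $\rho^{(k)} \to \infty$: the linear part of $F_k$ is $\rho^{(k)}(\mathbf{S}+\rho^{(k)}\mathbf{D})^{-1}\mathbf{A}$, whose norm tends to that of $\mathbf{D}^{-1}\mathbf{A}$, a matrix with eigenvalue $1$ (eigenvector $\mathbf{1}$), so the contraction moduli are not bounded away from $1$. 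Concretely, since $\mathbf{A}\mathbf{1} = \mathbf{D}\mathbf{1}$, for any scalar $t$ one has
\begin{displaymath}
F_k(t\mathbf{1})_i = \frac{s_i x_i^{+} + t\,\rho^{(k)} d_i}{s_i + \rho^{(k)} d_i} \longrightarrow t \qquad (k \to \infty),
\end{displaymath}
so points of $D$ lying along the diagonal direction near $\partial D$ have images under $F_k$ (for large $k$) coming arbitrarily close to $\partial D$. Consequently, for \emph{any} bounded open $D \supseteq [0,1]^n$, the union $\bigcup_k F_k(D)$ has closure meeting $\partial D$, and no set $E$ with $\overline{E} \subset D$ (the compact containment that the Lorentzen--Gill theorems actually require) can absorb all the images. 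To be fair, this hole is inherited from the paper itself, which asserts Corollary \ref{cor:Convergence} as ``immediate'' without ever checking the nesting condition; if you treat Corollary \ref{cor:Convergence} and the fixed-point clause of Lemma \ref{lem:Contraction2} as black boxes, your deduction of the theorem is exactly the paper's. But your proposed verification does not repair the gap you correctly identified, and a genuine repair would need a different argument (e.g., exploiting that the trajectory and the fixed points all remain in the compact cube, and quantifying the contraction there), not merely a different choice of neighborhoods.
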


This means that in the case of increasing and unbounded peer pressure, all the agents' opinions always converge to consensus. In addition, the value of this consensus is the average of their preferences weighted by their stubbornness. This is irrespective of the weighting of the edges in the network, so long as the network is connected.

We illustrate opinion consensus on a simple graph with 15 vertices in Fig. \ref{fig:SmallGraph}. The vertices are organized into three connected cliques. Each clique was initialized with a distinct range of opinions in $[0,1]$. Initial stubbornness was set randomly and is shown by relative vertex size.
\begin{figure}[htbp]
\centering
\includegraphics[scale=0.45]{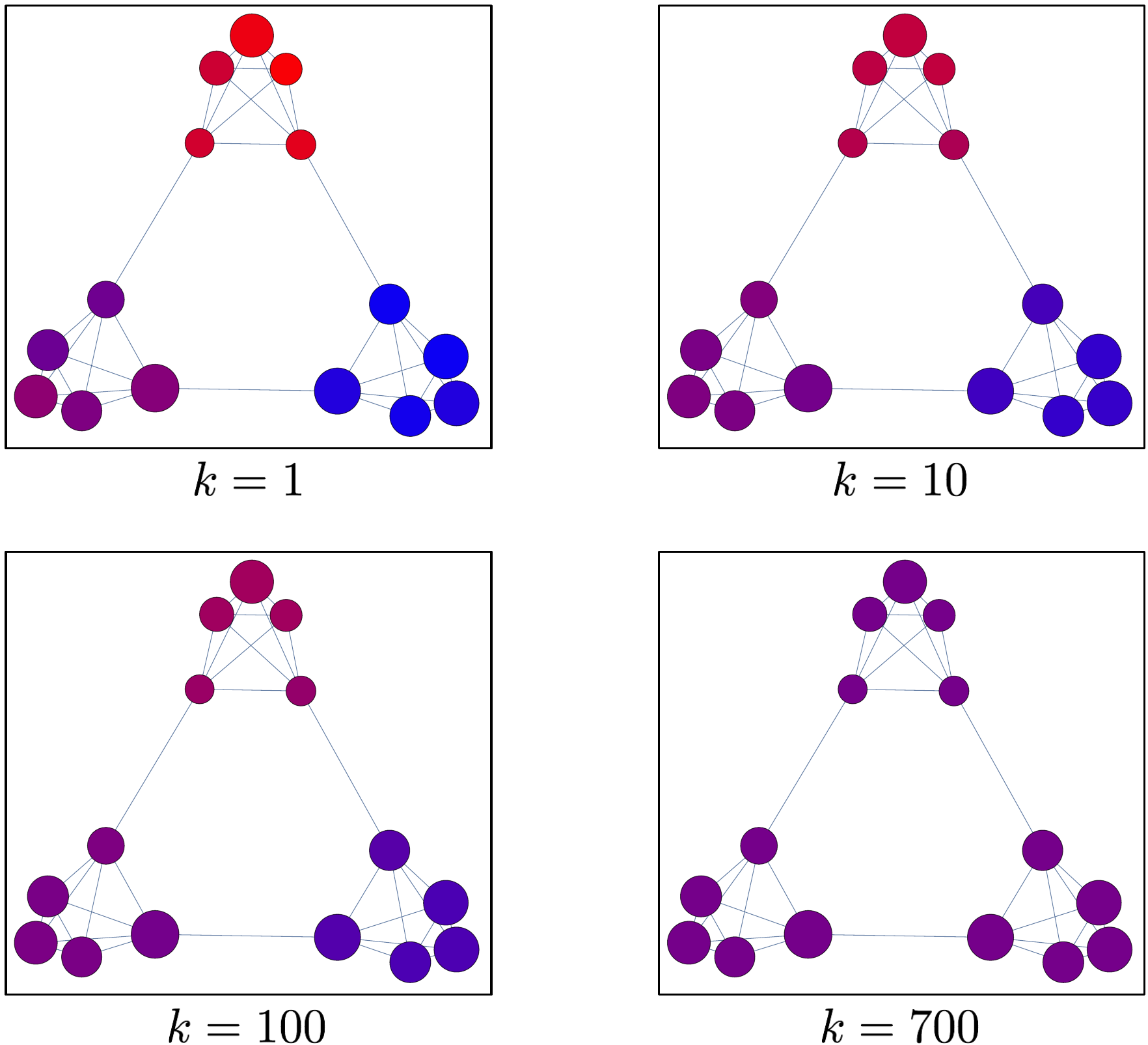}
\caption{The evolution of opinions on a small graph with three cliques. Color indicates opinion, size resistance to change.}
\label{fig:SmallGraph}
\end{figure}
The opinion trajectories for this example are shown in Fig. \ref{fig:SmallOpinions}.
\begin{figure}[htbp]
\centering
\subfigure[Convergence]{\includegraphics[scale=0.25]{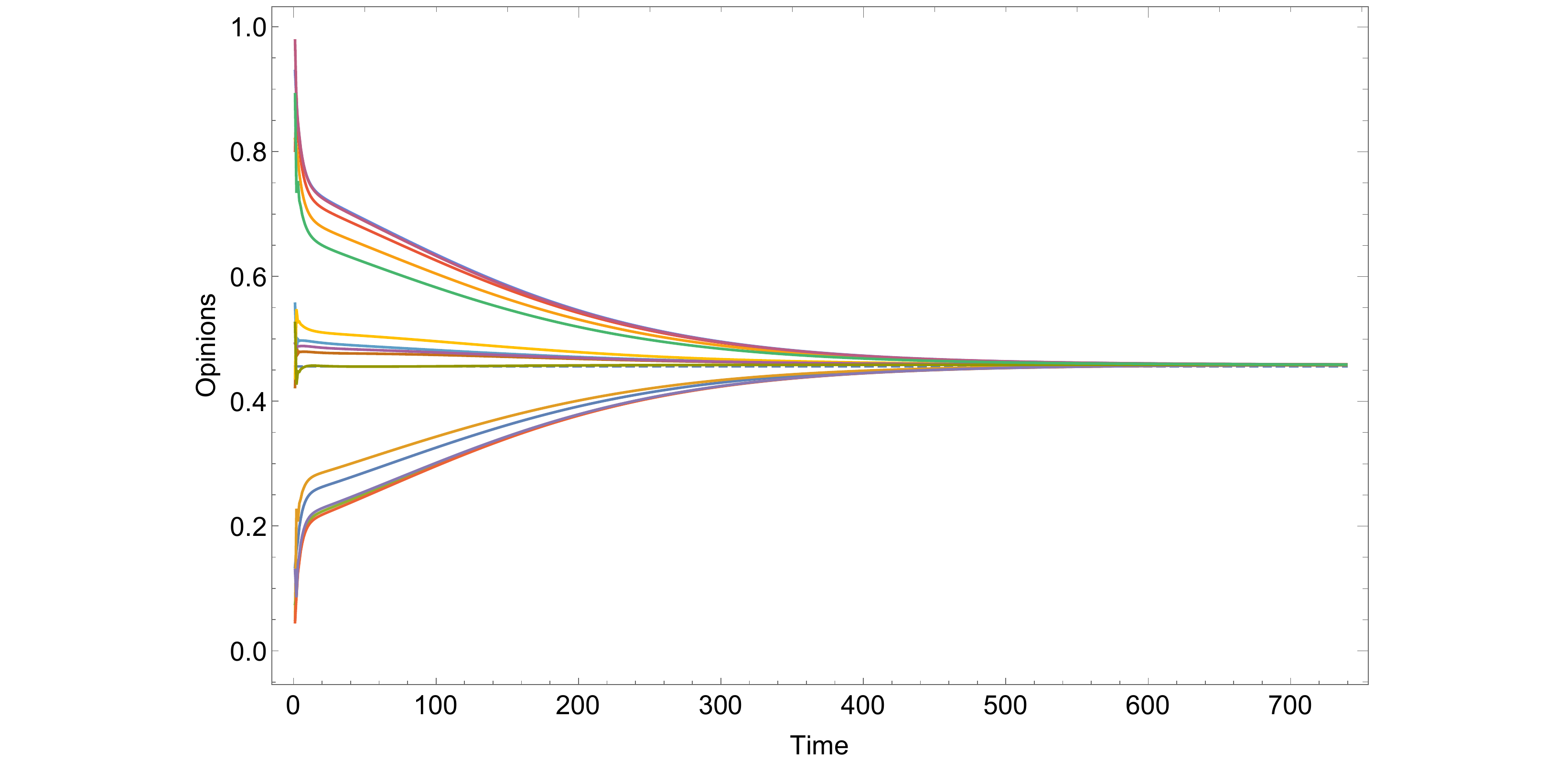}}
\subfigure[Initial Transient Behavior]{\includegraphics[scale=0.26]{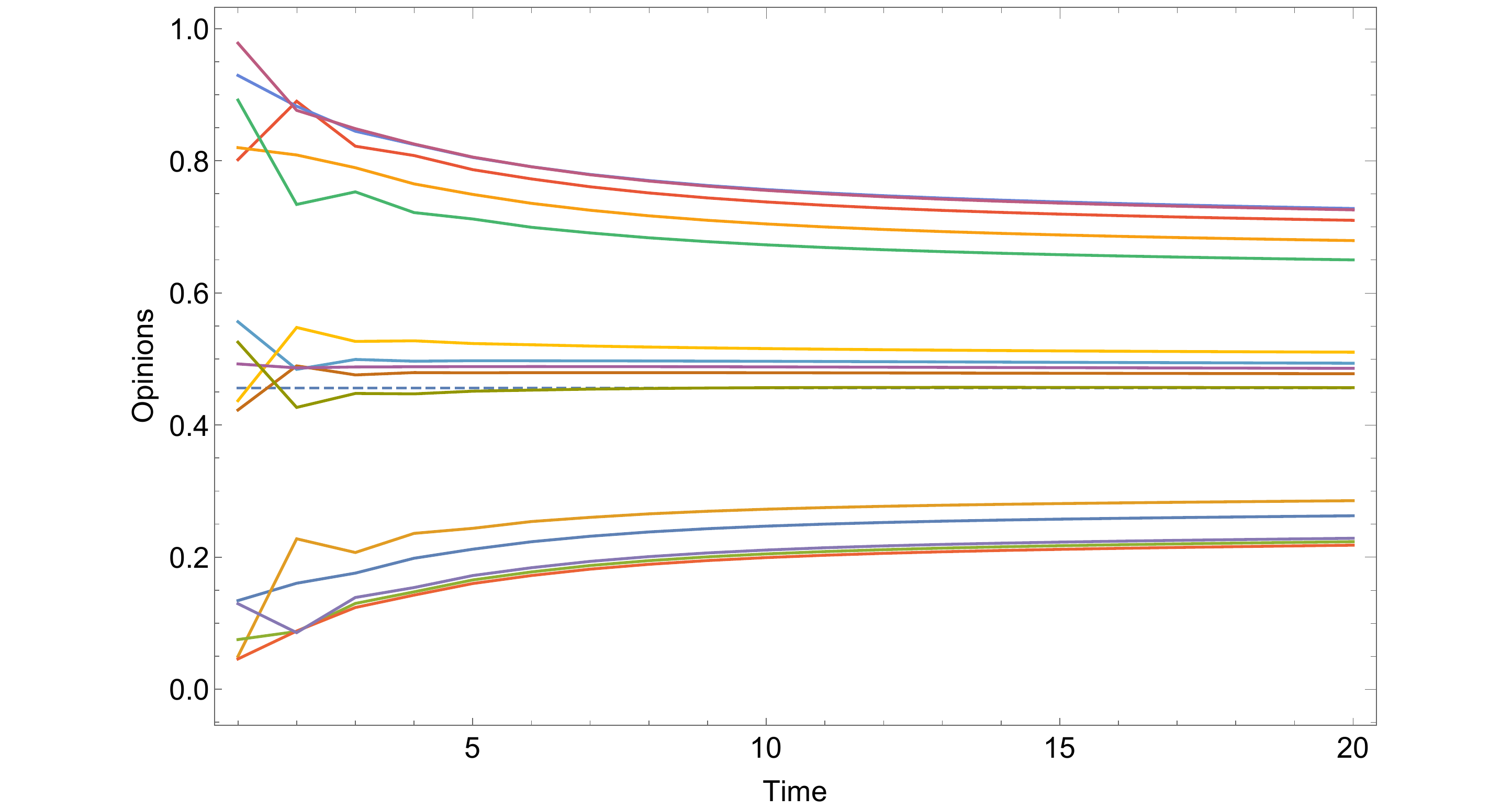}}
\caption{The opinion trajectories are shown illustrating (a) convergences and (b) initial transient behaviors.}
\label{fig:SmallOpinions}
\end{figure}

In the case of increasing but bounded peer pressure, we have:
\begin{displaymath}
\lim_{k \to\infty} \rho^{(k)} \leq \rho^*
\end{displaymath}
Further, this limit always exists by monotone convergence. Intuitively, this means the influence of others is limited, and that personal preferences will always slightly skew the opinions of others. Again, this is consistent with social influence theories on bounded peer pressure and trade-offs with comfort level \cite{BJCD09,Bindel2015248}.

\begin{theorem}\label{finitepressure}
Suppose $\rho^{(k)}$ is increasing and bounded and that:
\begin{displaymath}
\lim_{k \to\infty} \rho^{(k)} = \rho^*,
\end{displaymath}
then 
\begin{displaymath}
\lim_{k \to \infty} \mathbf{x}^{(k)} = (\mathbf{S} + \rho^*\mathbf{L})^{-1}\mathbf{S}\mathbf{x}^{+}.
\end{displaymath}
\end{theorem}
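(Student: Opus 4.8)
The plan is to follow exactly the architecture used for Theorem \ref{thm:Convergence}, replacing the role played by Theorem \ref{infinitepressure} (which identified the limit of the fixed points in the unbounded regime) with a continuity argument appropriate to a finite limit $\rho^*$. Write $\overline{\mathbf{x}}^* = (\mathbf{S} + \rho^*\mathbf{L})^{-1}\mathbf{S}\mathbf{x}^+$ for the target. Since $\rho^{(k)}$ is increasing with $\rho^{(k)} > 0$, we have $\rho^* \geq \rho^{(1)} > 0$, so by Lemma \ref{invmat} the matrix $\mathbf{S} + \rho^*\mathbf{L}$ is invertible and $\overline{\mathbf{x}}^*$ is well defined. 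The whole proof then reduces to two independent facts about the sequence of fixed points $\overline{\mathbf{x}}^{(k)}$ from Lemma \ref{lem:Contraction}, together with the structural convergence already recorded in Corollary \ref{cor:Convergence}.

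First I would invoke Corollary \ref{cor:Convergence}: because each $F_k$ is a contraction (Lemma \ref{lem:Contraction}, valid for every $\rho^{(k)}>0$), the composition $G_k = F_k \circ \cdots \circ F_1$ converges uniformly to a constant map $G \equiv c$. The second assertion of Lemma \ref{lem:Contraction2} further tells us that the fixed points of the $F_k$ converge to this same constant, i.e. $\overline{\mathbf{x}}^{(k)} \to c$. Since $\mathbf{x}^{(k)} = G_k(\mathbf{x}^{(0)})$, uniform convergence gives $\mathbf{x}^{(k)} \to c$, so it remains only to identify $c$.

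Second, I would compute $\lim_k \overline{\mathbf{x}}^{(k)}$ directly. By Eq. (\ref{eqn:FixedPointk}), $\overline{\mathbf{x}}^{(k)} = (\mathbf{S} + \rho^{(k)}\mathbf{L})^{-1}\mathbf{S}\mathbf{x}^+$. The affine map $\rho \mapsto \mathbf{S} + \rho\mathbf{L}$ is continuous and, by Lemma \ref{invmat}, takes values in $\mathrm{GL}_n(\mathbb{R})$ for every $\rho > 0$; since inversion is continuous on $\mathrm{GL}_n(\mathbb{R})$ and $\rho^{(k)} \to \rho^* > 0$, we obtain $(\mathbf{S} + \rho^{(k)}\mathbf{L})^{-1} \to (\mathbf{S} + \rho^*\mathbf{L})^{-1}$ and hence $\overline{\mathbf{x}}^{(k)} \to \overline{\mathbf{x}}^*$. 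By uniqueness of limits this forces $c = \overline{\mathbf{x}}^*$, and therefore $\mathbf{x}^{(k)} \to \overline{\mathbf{x}}^* = (\mathbf{S} + \rho^*\mathbf{L})^{-1}\mathbf{S}\mathbf{x}^+$, which is the claim.

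The routine step is the continuity argument; the step needing the most care is confirming that Corollary \ref{cor:Convergence} / Lemma \ref{lem:Contraction2} genuinely applies in this bounded regime, i.e. that $\{F_k\}$ meets the nesting hypothesis $f_n(D) \subseteq E \subseteq D_0 \subseteq D$. This is in fact easier here than in the unbounded case: the contraction factors $\lVert \rho^{(k)}(\mathbf{S}+\rho^{(k)}\mathbf{D})^{-1}\mathbf{A}\rVert$ converge to $\lVert \rho^*(\mathbf{S}+\rho^*\mathbf{D})^{-1}\mathbf{A}\rVert < 1$ and are thus uniformly bounded away from $1$ (whereas they can tend to $1$ as $\rho \to \infty$), and every $F_k$ maps the compact convex set $[0,1]^n$ into itself, so a common $E$ exists. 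Should one prefer to avoid the analytic-composition machinery entirely, the same conclusion follows from a direct estimate: setting $a_k = \lVert \mathbf{x}^{(k)} - \overline{\mathbf{x}}^{(k)}\rVert$ and $\epsilon_k = \lVert \overline{\mathbf{x}}^{(k)} - \overline{\mathbf{x}}^{(k-1)}\rVert$, the contraction bound of Lemma \ref{lem:Contraction} yields $a_k \leq c_k(a_{k-1} + \epsilon_k)$ with $\limsup_k c_k < 1$ and $\epsilon_k \to 0$ (the latter because $\overline{\mathbf{x}}^{(k)}$ is convergent), whence $a_k \to 0$; combined with $\overline{\mathbf{x}}^{(k)} \to \overline{\mathbf{x}}^*$ this again gives $\mathbf{x}^{(k)} \to \overline{\mathbf{x}}^*$.
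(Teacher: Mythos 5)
Your proof is correct and takes essentially the same route as the paper's: the paper's proof is exactly your first two paragraphs---invoke Corollary \ref{cor:Convergence} to identify $\lim_{k} \mathbf{x}^{(k)}$ with $\lim_{k} \overline{\mathbf{x}}^{(k)}$, then evaluate the latter via Lemma \ref{invmat} and continuity of matrix inversion in $\mathrm{GL}_{n}(\mathbb{R})$. Your added verification that $\{F_k\}$ satisfies the nesting hypothesis of Lemma \ref{lem:Contraction2} (with contraction moduli uniformly bounded away from $1$), and your elementary fallback estimate $a_k \leq c_k(a_{k-1}+\epsilon_k)$, supply rigor the paper leaves implicit but do not constitute a different approach.
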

\begin{proof}
Since $\rho^{(k)}$ is increasing and bounded, it converges to a finite number $\rho^*$ by monotone convergence. From Lemma \ref{invmat}, $(\mathbf{S} + \rho^* \mathbf{L})$ is defined and invertible. Since matrix inversion is continuous in $\mathrm{GL}_{n}(\mathbb{R})$, by Theorem \ref{convergence}:
\begin{align*}
\lim_{k \to \infty} \mathbf{x}^{(k)} &= \lim_{k \to \infty} \overline{\mathbf{x}}^{(k)}\\
&= \lim_{k \to \infty}(\mathbf{S} + \rho^{(k)}\mathbf{L})^{-1}\mathbf{S}\mathbf{x}^{+}\\
&= (\mathbf{S} + \lim_{k \to \infty} \rho^{(k)} \mathbf{L})^{-1}\mathbf{S}\mathbf{x}^{+}\\
&= (\mathbf{S} + \rho^* \mathbf{L})^{-1}\mathbf{S}\mathbf{x}^{+}
\end{align*}
\end{proof} 
The above theorem tells us that if peer pressure is increasing and bounded, the agents' opinions converge to a fixed distribution, which may not be a consensus, but is easily computable from the initial preferences. In this case, the shape of the network is important for determining the limit distribution, as the edge weights factor into the Laplacian. This result is similar to the convergence point given in \cite{Bindel2015248} where stubbornness coefficients are not presented and peer pressure is constant.

\section{Convergence Rate}\label{sec:rate}
We analyze the convergence rate of the algorithm and obtain a secondary result on efficiency. Define the utility of these convergent points to be the sum of the stress of the agents when the state $\mathbf{x}$ is constant. Formally:
\begin{equation}
\begin{aligned}
U^{(k)} (\mathbf{x}) &= \sum_{i} J_{i}(x_{i}, \mathbf{x}, k)\\ 
&= \sum_{i =1}^{n} s_{i}(x_{i} - x_{i}^{+})^{2} +  \rho^{k} \left(\sum_{i,j} w_{ij}(x_{i}-x_{j})^{2}\right)\\ 
&= (\mathbf{x} - \mathbf{x}^{+})^{T}\mathbf{S}(\mathbf{x} - \mathbf{x}^{+}) + 2\rho^{k} \mathbf{x}^{T}\mathbf{L}\mathbf{x} \\
&= \mathbf{x}^{T}(\mathbf{S} + 2\rho^{k}\mathbf{L})\mathbf{x} - 2\mathbf{x}^{T}\mathbf{S} \mathbf{x}^{+} + (\mathbf{x}^{+})^{T}\mathbf{S} \mathbf{x}^{+}
\end{aligned}
\label{eqn:GlobalUtility}
\end{equation}
Define the limiting utility $U(\mathbf{x})$ as:
\begin{equation}
U(\mathbf{x}) = \lim_{k \rightarrow \infty} \frac{1}{\rho^{(k)}}U^{(k)}(\mathbf{x})
\end{equation}
The following lemma is immediately clear from the construction of the functions $J_i$, the fact that $U^{(k)}$ is a strictly convex function and $U$ is the limit of these strictly convex functions:
\begin{lemma} The global utility function $U(\mathbf{x})$ is convex. Furthermore, the fact that (i) $U^{(k)}$ is smooth on its entire domain and (ii) $U^{(k)}(\mathbf{x})/\rho^{(k)}$ converges uniformly to $U(\mathbf{x})$, implies that $U(\mathbf{x})$ is both differentiable and its derivative can be computed as the limit of the derivatives of 
$U^{(k)}(\mathbf{x})/\rho^{(k)}$.
\hfill\qed
\end{lemma}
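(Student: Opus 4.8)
The plan is to make the limiting object explicit first, then dispatch convexity and differentiability separately. Starting from Eq.~(\ref{eqn:GlobalUtility}) and dividing by $\rho^{(k)}$, I would rewrite
\begin{equation*}
\frac{U^{(k)}(\mathbf{x})}{\rho^{(k)}} = 2\mathbf{x}^{T}\mathbf{L}\mathbf{x} + \frac{1}{\rho^{(k)}}(\mathbf{x}-\mathbf{x}^{+})^{T}\mathbf{S}(\mathbf{x}-\mathbf{x}^{+}),
\end{equation*}
so that the entire $\mathbf{S}$-dependent part carries a factor $1/\rho^{(k)}$. Since $\rho^{(k)}\to\infty$ and the bracketed quadratic is bounded on the compact opinion space $[0,1]^{n}$, the remainder is $O(1/\rho^{(k)})$ uniformly in $\mathbf{x}$; this simultaneously identifies the limit $U(\mathbf{x}) = 2\mathbf{x}^{T}\mathbf{L}\mathbf{x}$ and verifies the uniform convergence asserted in hypothesis (ii).

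Convexity then follows in two complementary ways, and I would record both. Directly, $U$ is a quadratic form in the matrix $2\mathbf{L}$, which is positive semidefinite since $\mathbf{L}$ is a graph Laplacian (first lemma), so $U$ is convex. Abstractly — which is the reading the statement alludes to — $U$ is a pointwise limit of the strictly convex functions $U^{(k)}/\rho^{(k)}$, and a pointwise limit of convex functions is convex.

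For differentiability and the derivative-limit claim I would compute explicitly. The limit $U(\mathbf{x}) = 2\mathbf{x}^{T}\mathbf{L}\mathbf{x}$ is a polynomial, hence smooth, with $\nabla U(\mathbf{x}) = 4\mathbf{L}\mathbf{x}$, while
\begin{equation*}
\nabla\!\left(\frac{U^{(k)}}{\rho^{(k)}}\right)\!(\mathbf{x}) = 4\mathbf{L}\mathbf{x} + \frac{2}{\rho^{(k)}}\mathbf{S}(\mathbf{x}-\mathbf{x}^{+}) \longrightarrow 4\mathbf{L}\mathbf{x}
\end{equation*}
uniformly on $[0,1]^{n}$. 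This immediately yields both that $U$ is differentiable and that $\nabla U = \lim_{k}\nabla(U^{(k)}/\rho^{(k)})$, as claimed.

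The genuine care required — and the main obstacle — is that hypotheses (i)–(ii) \emph{as literally stated} (smoothness of each $U^{(k)}$ together with uniform convergence of the \emph{functions}) do not by themselves force the derivatives to converge; the standard counterexample $\tfrac{1}{n}\sin(n^{2}x)$ converges uniformly to $0$ yet has non-convergent derivative. What rescues the implication here is the extra structure, and a rigorous proof must name it. The cleanest route is the quadratic/polynomial structure exploited above, where the coefficients of the fixed-degree polynomials converge and therefore drag the gradients along. Alternatively, one may argue through convex analysis: for a sequence of differentiable convex functions converging pointwise, the gradients converge to the gradient of the limit at every point where the limit is differentiable (Rockafellar's theorem on convergence of subdifferentials); once convexity of $U$ and its differentiability (from the explicit form) are established, gradient convergence follows. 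I would feature the explicit computation as the main line and cite the convex-analysis result only as the principled justification for reading hypotheses (i)–(ii) the way the lemma intends.
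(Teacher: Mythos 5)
Your proposal is correct, and it is considerably more explicit than what the paper does: the paper offers no real proof at all, asserting the lemma as ``immediately clear'' from the construction of the $J_i$, the strict convexity of each $U^{(k)}$, and the fact that $U$ is a limit of these strictly convex functions (the lemma is stamped with \qed\ and no argument). Your route --- identifying the limit explicitly as $U(\mathbf{x}) = 2\mathbf{x}^{T}\mathbf{L}\mathbf{x}$ (valid under the section's standing assumption $\rho^{(k)}\to\infty$; in the bounded case the limit retains an $\mathbf{S}$-term but is still a smooth convex quadratic, so the same argument goes through), reading off convexity from positive semidefiniteness of $\mathbf{L}$, and computing $\nabla\bigl(U^{(k)}/\rho^{(k)}\bigr) = 4\mathbf{L}\mathbf{x} + \tfrac{2}{\rho^{(k)}}\mathbf{S}(\mathbf{x}-\mathbf{x}^{+}) \to 4\mathbf{L}\mathbf{x}$ uniformly on $[0,1]^{n}$ --- delivers every claim of the lemma by direct computation. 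Your diagnosis of the logical gap is also well taken and is the most valuable part of the write-up: hypotheses (i)--(ii) as literally stated are insufficient (uniform convergence of smooth functions does not control derivatives, as $\tfrac{1}{n}\sin(n^{2}x)$ shows), and what actually rescues the conclusion is either the fixed-degree polynomial structure, which forces coefficientwise and hence gradient convergence, or the convex-analytic fact (Rockafellar) that gradients of differentiable convex functions converging pointwise to a differentiable limit must converge. The paper's implicit appeal to ``limit of strictly convex functions'' gestures at the latter but never invokes it; also note that strictness is lost in the limit, since $\mathbf{L}$ annihilates $\mathbf{1}$, which is exactly why the lemma claims only convexity of $U$. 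In short: the paper buys brevity at the cost of a logically incomplete justification, while your version supplies the missing mechanism and would stand as a complete replacement proof.
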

%Leveraging the above Lemma on the utility function, we are now ready to  present our main optimality results on the convergence point conditions. 

Using the global utility function, we can analyze the convergence rate of the update rule. From Eq. (\ref{eqn:ExtendedFixedPoint}), we can compute:
\begin{multline}
\Delta x_{i}^{(k-1)} = x_i^{(k)} - x_i^{(k-1)} = \\
\frac{s_i(x_i^+ - x_i^{(k-1)}) + \sum_{j=1}^n \left(x_j^{(k-1)} -x_i^{(k-1)}\right)}{s_i + \rho^{(k)}\sum_{j=1}^n w_{ij}}
\end{multline}
Let:
\begin{equation}
\alpha_i^{(k)} = \frac{1}{s_i + \rho^{(k)}\sum_{j=1}^n w_{ij}}
\end{equation}
and define $\mathbf{H}^{(k)} = \tfrac{1}{2}\mathrm{diag}\left(\alpha_1^{(k)},\dots,\alpha_n^{(k)}\right)$. Computing the gradient of $U^{(k)}$ yields:
\begin{equation}
\Delta \mathbf{x} = -\mathbf{H}^{(k)}\nabla U^{(k)}\left(\mathbf{x}_i^{(k-1)}\right)
\end{equation}
We conclude the update rule, Eq. (\ref{eqn:ExtendedFixedPoint}) can be written:
\begin{equation}
\mathbf{x}^{(k)} = \mathbf{x}^{(k-1)} - \mathbf{H}^{(k)}\nabla U^{(k)}\left(\mathbf{x}^{k-1}\right)
\label{eqn:GradientUpdate}
\end{equation}
Necessarily, $\mathbf{H}^{(k)}$ is always positive definite and therefore $- \mathbf{H}^{(k)}\nabla U^{(k)}\left(\mathbf{x}^{k-1}\right)$ is always a descent direction for $U^{(k)}$. Moreover, $(\nabla U_k)^T \nabla U > 0$ and consequently $- \mathbf{H}^{(k)}\nabla U^{(k)}\left(\mathbf{x}^{k-1}\right)$ is a descent direction for $U(\mathbf{x})$. Thus, the update rule is a descent algorithm, which explains the initial fast convergence toward the average (see Fig. \ref{fig:SmallOpinions}).  When the descent direction converges to a Newton step, a descent algorithm can be shown to converge superlinearly  \cite{Bert99}. However, these steps do not converge to Newton steps. As $\rho^{(k)}$ grows large, $\alpha_i^{(k)} \rightarrow 0$ and $U_k/\rho^{(k)} \rightarrow U$ and consequently for large $k$:
\begin{displaymath}
\frac{1}{\rho^{(k)}}\mathbf{H}^{(k)}\nabla U^{(k)}\left(\mathbf{x}^{k-1}\right) \approx \epsilon \nabla U\left(\mathbf{x}^{k-1}\right)
\end{displaymath}
for $\epsilon \sim 1/\rho^{(k)}$. Thus, the update rule approaches a simple gradient descent. We show that a consequence of this is a linear convergence rate. 
%%%%%%%%%%%%%%%%%%%%%%%%%%%

Let:
\begin{displaymath}
\mathbf{x}^*= \frac{\sum_{i=1}^{n} s_{i}x_{i}^{+}}{\sum_{i=1}^{n} s_{i}}{\bf 1}
\end{displaymath}
and define:
\begin{displaymath}
\mathbf{y}^{(k)} = \mathbf{x}^{(k)} - \mathbf{x}^*.
\end{displaymath}
From Eq. (\ref{eqn:GradientUpdate}) we compute:
\begin{equation}
\frac{\left\lVert\mathbf{x}^{(k+1)} - \mathbf{x}^*\right\rVert}{\left\lVert\mathbf{x}^{(k)} - \mathbf{x}^*\right\rVert} = 
\frac{\left\lVert\mathbf{y}^{(k)} - \mathbf{H}^{(k+1)}\nabla U^{(k+1)}(\mathbf{x}^{(k)})\right\rVert}{\left\lVert\mathbf{y}^{(k-1)} - \mathbf{H}^{(k)}\nabla U^{(k)}(\mathbf{x}^{(k-1)})\right\rVert}
\end{equation}
Assuming $\rho^{(k)} \rightarrow \infty$ as $k \rightarrow \infty$, and expanding the gradient using Eq. (\ref{eqn:GlobalUtility}) we obtain:
\begin{widetext}
\begin{multline}
\lim_{k \rightarrow \infty}\frac{\left\lVert\mathbf{x}^{(k+1)} - \mathbf{x}^*\right\rVert}{\left\lVert\mathbf{x}^{(k)} - \mathbf{x}^*\right\rVert}  = \lim_{k \rightarrow \infty} \frac{\tfrac{1}{\rho^{(k)}}}{\tfrac{1}{\rho^{(k)}}}\frac{\left\lVert\mathbf{y}^{(k)} - \mathbf{H}^{(k+1)}\left(\left[\mathbf{S} + 2\rho^{(k+1)}\mathbf{L}\right]\mathbf{x}^{(k)} - 2\mathbf{S}\mathbf{x}^{+}\right)\right\rVert}{\left\lVert\mathbf{y}^{(k-1)} - \mathbf{H}^{(k)}\left(\left[\mathbf{S} + 2\rho^{(k)}\mathbf{L}\right]\mathbf{x}^{(k-1)} - 2\mathbf{S}\mathbf{x}^{+}\right)\right\rVert} =\\ 
\lim_{k \rightarrow \infty}
\frac{\left\lVert\mathbf{y}^{(k)}/\rho^{(k)} - \mathbf{H}^{(k+1)}\left(\left[\mathbf{S}/\rho^{(k)} + 2\tfrac{\rho^{(k+1)}}{\rho^{(k)}}\mathbf{L}\right]\mathbf{x}^{(k)} - 2\mathbf{S}\mathbf{x}^{+}/\rho^{(k)}\right)\right\rVert}{\left\lVert\mathbf{y}^{(k-1)}/\rho^{(k)} - \mathbf{H}^{(k)}\left(\left[\mathbf{S}/\rho^{(k)} + 2\mathbf{L}\right]\mathbf{x}^{(k-1)} - 2\mathbf{S}\mathbf{x}^{+}/\rho^{(k)}\right)\right\rVert} = \\
\lim_{k \rightarrow \infty}\frac{2\tfrac{\rho^{(k+1)}}{\rho^{(k)}}\left\lVert \mathbf{H}^{(k+1)}\mathbf{L}\mathbf{x}^{(k)}\right\rVert}
{2\left\lVert\mathbf{H}^{(k)}\mathbf{L}\mathbf{x}^{(k-1)}\right\rVert}
\end{multline}
\end{widetext}
As $\rho^{(k)} \rightarrow \infty$, we see that:
\begin{displaymath}
\mathbf{H}^{(k)} \rightarrow \frac{1}{2\rho^{(k)}}\mathbf{D}^{-1},
\end{displaymath}
where $\mathbf{D}$ is the diagonal weighted degree matrix. Then:
\begin{multline*}
\lim_{k \rightarrow \infty} \frac{2\tfrac{\rho^{(k+1)}}{\rho^{(k)}}\left\lVert \mathbf{H}^{(k+1)}\mathbf{L}\mathbf{x}^{(k)}\right\rVert}
{2\left\lVert\mathbf{H}^{(k)}\mathbf{L}\mathbf{x}^{(k-1)}\right\rVert} = \\  \frac{\rho^{(k+1)}}{\rho^{(k)}}\frac{\tfrac{1}{\rho^{(k+1)}}\left\lVert\mathbf{D}^{-1}\mathbf{L}\mathbf{x}^{(k)}\right\rVert}{\tfrac{1}{\rho^{(k)}}\left\lVert\mathbf{D}^{-1}\mathbf{L}\mathbf{x}^{(k-1)}\right\rVert} = 1
\end{multline*}
Thus we have shown:
\begin{theorem} The convergence rate of the update rule given in Eq. (\ref{eqn:ExtendedFixedPoint}) is linear. In particular: 
\begin{equation}
\lim_{k \rightarrow \infty}\frac{\left\lVert\mathbf{x}^{(k+1)} - \mathbf{x}^*\right\rVert}{\left\lVert\mathbf{x}^{(k)} - \mathbf{x}^*\right\rVert} = 1,
\label{eqn:RateOfConvergence}
\end{equation}
\end{theorem}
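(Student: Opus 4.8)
The plan is to start from the gradient-descent form of the update, Eq.~(\ref{eqn:GradientUpdate}), and to leverage three facts already in hand: that $\mathbf{x}^{(k)} \to \mathbf{x}^*$ when $\rho^{(k)} \to \infty$ (Theorem~\ref{thm:Convergence}); the asymptotic identity $\mathbf{H}^{(k)} \to \tfrac{1}{2\rho^{(k)}}\mathbf{D}^{-1}$; and the algebraic fact that $\mathbf{x}^* = c\mathbf{1}$ lies in $\ker\mathbf{L}$, so that $\mathbf{L}\mathbf{x}^{(k)} = \mathbf{L}\mathbf{y}^{(k)}$ with $\mathbf{y}^{(k)} = \mathbf{x}^{(k)} - \mathbf{x}^*$. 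The overall strategy is to substitute the update into the ratio, rescale numerator and denominator by a common factor $1/\rho^{(k)}$, and show that a single dominant term survives in each, whose quotient tends to $1$.

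Concretely, I would first write $\mathbf{x}^{(k+1)} - \mathbf{x}^* = \mathbf{y}^{(k)} - \mathbf{H}^{(k+1)}\nabla U^{(k+1)}(\mathbf{x}^{(k)})$, expand $\nabla U^{(k+1)}$ via Eq.~(\ref{eqn:GlobalUtility}), and multiply both numerator and denominator of the ratio by $1/\rho^{(k)}$. Since $\mathbf{x}^{(k)}$ is bounded and $\mathbf{y}^{(k)} \to 0$ while $\rho^{(k)} \to \infty$, every term carrying a factor $\mathbf{S}$, a factor $\mathbf{S}\mathbf{x}^+$, or a bare $\mathbf{y}^{(k)}$ is $O(1/\rho^{(k)})$ and drops out, leaving the $\rho$-weighted Laplacian term as the unique survivor in each of the numerator and denominator. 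Substituting $\mathbf{H}^{(k)} \to \tfrac{1}{2\rho^{(k)}}\mathbf{D}^{-1}$ then cancels the prefactor $\rho^{(k+1)}/\rho^{(k)}$ against the scaling of $\mathbf{H}^{(k+1)}$, collapsing the ratio to $\|\mathbf{D}^{-1}\mathbf{L}\mathbf{x}^{(k)}\| / \|\mathbf{D}^{-1}\mathbf{L}\mathbf{x}^{(k-1)}\|$.

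The main obstacle is precisely this final step, and it is where a naive appeal to continuity fails: because $\mathbf{L}\mathbf{x}^{(k)} = \mathbf{L}\mathbf{y}^{(k)} \to \mathbf{0}$, the reduced quotient is a genuine $0/0$ form whose limit is \emph{not} determined by the limits of its numerator and denominator separately. To pin it to $1$ I would analyze the exact error recursion obtained by subtracting $\mathbf{x}^*$ from Eq.~(\ref{eqn:ExtendedFixedPoint}): using $\mathbf{A}\mathbf{1} = \mathbf{D}\mathbf{1}$ one obtains $\mathbf{y}^{(k)} = \mathbf{T}_k\,\mathbf{y}^{(k-1)} + \mathbf{g}_k$, where $\mathbf{T}_k = \rho^{(k)}(\mathbf{S}+\rho^{(k)}\mathbf{D})^{-1}\mathbf{A}$ and $\mathbf{g}_k = (\mathbf{S}+\rho^{(k)}\mathbf{D})^{-1}\mathbf{S}(\mathbf{x}^+ - \mathbf{x}^*) = O(1/\rho^{(k)})$. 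Each row of $\mathbf{T}_k$ sums to $\rho^{(k)}d_i/(s_i + \rho^{(k)}d_i)$, which is strictly below $1$ but tends to $1$, so $\mathbf{T}_k$ is substochastic with spectral radius below but converging to $1$, and $\mathbf{T}_k \to \mathbf{D}^{-1}\mathbf{A}$, the random-walk transition matrix of spectral radius exactly $1$.

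With this recursion in hand I would argue that the per-step amplification of the error is governed asymptotically by $\mathbf{T}_k$: its dominant eigenvalue tends to $1$ (inheriting the Perron eigenvalue of $\mathbf{D}^{-1}\mathbf{A}$), while the forcing $\mathbf{g}_k$ decays one order faster and cannot alter the leading behavior, so the surviving error direction relaxes onto the slowest mode and $\|\mathbf{D}^{-1}\mathbf{L}\mathbf{y}^{(k)}\|/\|\mathbf{D}^{-1}\mathbf{L}\mathbf{y}^{(k-1)}\| \to 1$. I expect the delicate part to be justifying this alignment rigorously rather than merely asserting it, since it is what separates the true $0/0$ limit from the continuity argument that would (incorrectly) leave it undefined. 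A clean confirmation is the modal picture, where the row-sum computation above exhibits a per-step factor $\rho^{(k)}d_i/(s_i + \rho^{(k)}d_i) \to 1$ directly, displaying the sub-geometric, ratio-one decay and confirming that convergence to $\mathbf{x}^*$, though monotone, is only linear with limiting rate $1$.
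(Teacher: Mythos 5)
Your derivation coincides with the paper's own proof through every step of the reduction: the same rewriting via Eq.~(\ref{eqn:GradientUpdate}), the same rescaling of numerator and denominator by $1/\rho^{(k)}$, the same discarding of the $\mathbf{S}$, $\mathbf{S}\mathbf{x}^{+}$ and bare $\mathbf{y}^{(k)}$ terms, and the same substitution $\mathbf{H}^{(k)}\to\tfrac{1}{2\rho^{(k)}}\mathbf{D}^{-1}$ collapsing everything to $\lVert\mathbf{D}^{-1}\mathbf{L}\mathbf{x}^{(k)}\rVert/\lVert\mathbf{D}^{-1}\mathbf{L}\mathbf{x}^{(k-1)}\rVert$. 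Where you go beyond the paper is in noticing that this reduced quotient is a genuine $0/0$ form, since $\mathbf{L}\mathbf{x}^{(k)}=\mathbf{L}\mathbf{y}^{(k)}\to\mathbf{0}$; the paper simply writes ``$=1$'' at exactly this point, which is the fallacious continuity argument you warn against. Your error recursion $\mathbf{y}^{(k)}=\mathbf{T}_k\mathbf{y}^{(k-1)}+\mathbf{g}_k$ is also correct in every detail (the formula for $\mathbf{g}_k$, the row sums of $\mathbf{T}_k$, and the limit $\mathbf{T}_k\to\mathbf{D}^{-1}\mathbf{A}$ all check out). So you have located the real hole --- one the paper itself does not fill.

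However, the mechanism you propose for closing it would fail, for a structural reason. The slowest (Perron) mode of $\mathbf{T}_k$ is asymptotically $\mathbf{1}$, which spans $\ker\mathbf{L}$; so even a rigorous proof that $\mathbf{y}^{(k)}$ relaxes onto that mode says nothing about $\lVert\mathbf{D}^{-1}\mathbf{L}\mathbf{y}^{(k)}\rVert/\lVert\mathbf{D}^{-1}\mathbf{L}\mathbf{y}^{(k-1)}\rVert$, because the Laplacian annihilates precisely the component your alignment argument controls. What survives in $\mathbf{L}\mathbf{y}^{(k)}$ is the part of the error transverse to $\mathbf{1}$, and pure spectral relaxation of that part under $\mathbf{T}_k\approx\mathbf{D}^{-1}\mathbf{A}$ contracts per step by roughly the subdominant eigenvalue modulus $\lvert\mu_2\rvert<1$, which would drive the reduced quotient toward $\lvert\mu_2\rvert\neq 1$. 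What actually pins the quotient near $1$ is the forcing you dismissed: since $\mathbf{1}^{T}\mathbf{S}(\mathbf{x}^{+}-\mathbf{x}^{*})=0$, the vector $\mathbf{g}_k$ is (to leading order) purely transverse and of size $\Theta(1/\rho^{(k)})$, so the transverse error does not decay geometrically but is sustained at the scale of the moving fixed points $\overline{\mathbf{x}}^{(k)}-\mathbf{x}^{*}=\Theta(1/\rho^{(k)})$ from Eq.~(\ref{eqn:FixedPointk}); your claim that $\mathbf{g}_k$ ``decays one order faster and cannot alter the leading behavior'' is backwards exactly for the component that matters. Carrying the corrected analysis through makes the reduced quotient track $\rho^{(k)}/\rho^{(k+1)}$, so concluding the limit is $1$ additionally requires $\rho^{(k+1)}/\rho^{(k)}\to 1$ (subexponential growth of peer pressure) --- a hypothesis neither you nor the paper states, and without which (e.g.\ $\rho^{(k)}=2^{k}$) the asserted limit fails. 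Finally, note a defect you inherited from the paper: discarding $\mathbf{y}^{(k)}/\rho^{(k)}$ while retaining $\mathbf{H}^{(k+1)}\mathbf{L}\mathbf{x}^{(k)}$ throws away a term of the same order as (possibly larger than) the one kept; the two combine to $\mathbf{D}^{-1}\mathbf{A}\mathbf{y}^{(k)}/\rho^{(k)}$, which is the honest dominant term of the numerator, so the reduction step itself needs the same quantitative control of $\mathbf{y}^{(k)}$ that the final step does.
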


We illustrate the slow convergence on a larger example with 500 vertices organized into a scale-free graph using the Barab\'{a}si-Albert \cite{BA99} graph construction algorithm. The graph and snapshots of opinion evolution are shown in Fig. \ref{fig:BigGraphs}.
\begin{figure}[htbp]
\centering
\includegraphics[scale=0.4]{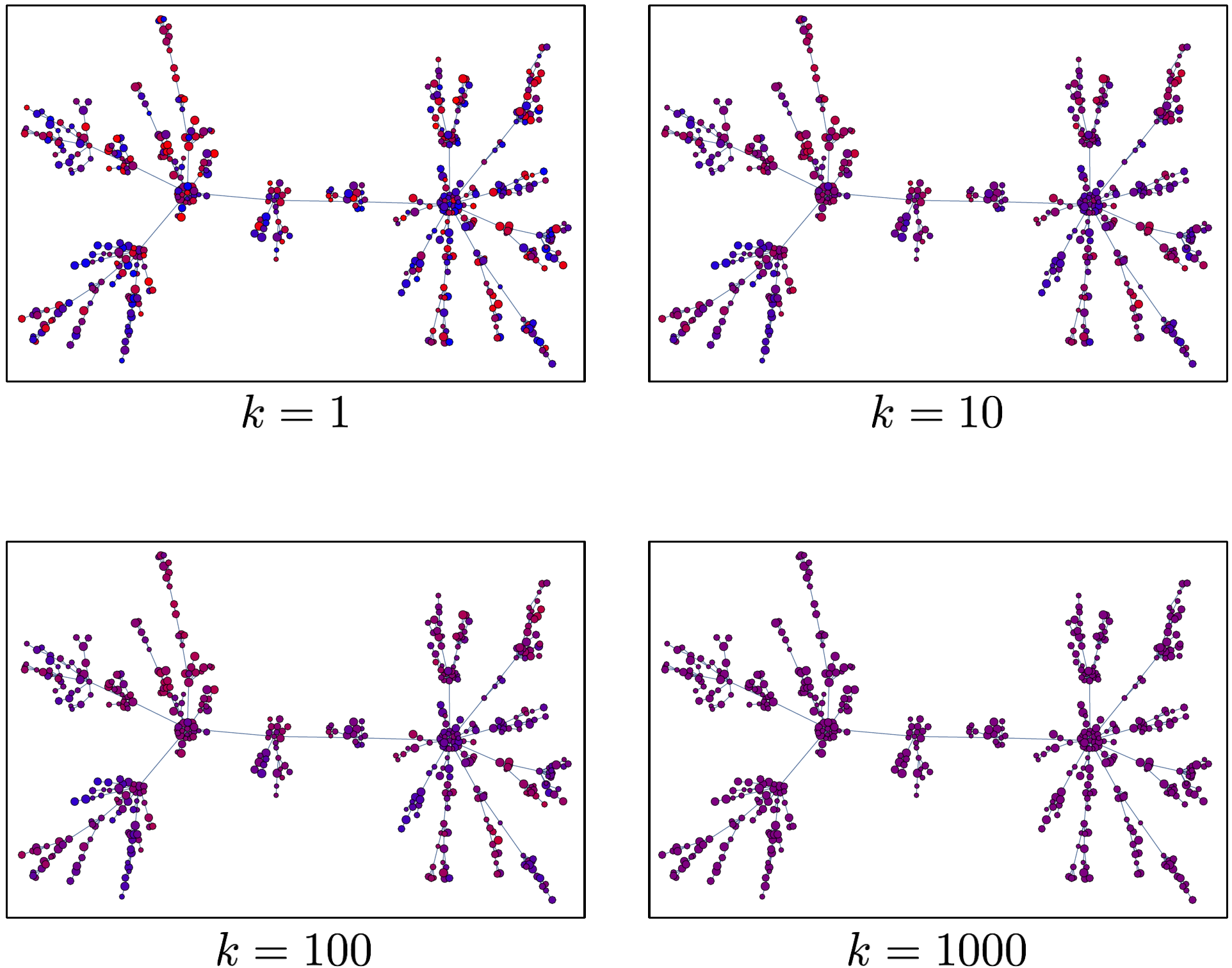}
\caption{A larger example of convergence using a scale-free graph with 500 vertices generated using the Barab\'{a}si-Albert algorithm.}.
\label{fig:BigGraphs}
\end{figure}
We show the opinion trajectories for the 500 vertex scale-free network in Fig. \ref{fig:BigGraphConvergence}(a) and illustrate Eq. (\ref{eqn:RateOfConvergence}) in Fig. \ref{fig:BigGraphConvergence}(b).
\begin{figure}[htbp]
\centering
\subfigure[Convergence]{\includegraphics[scale=0.36]{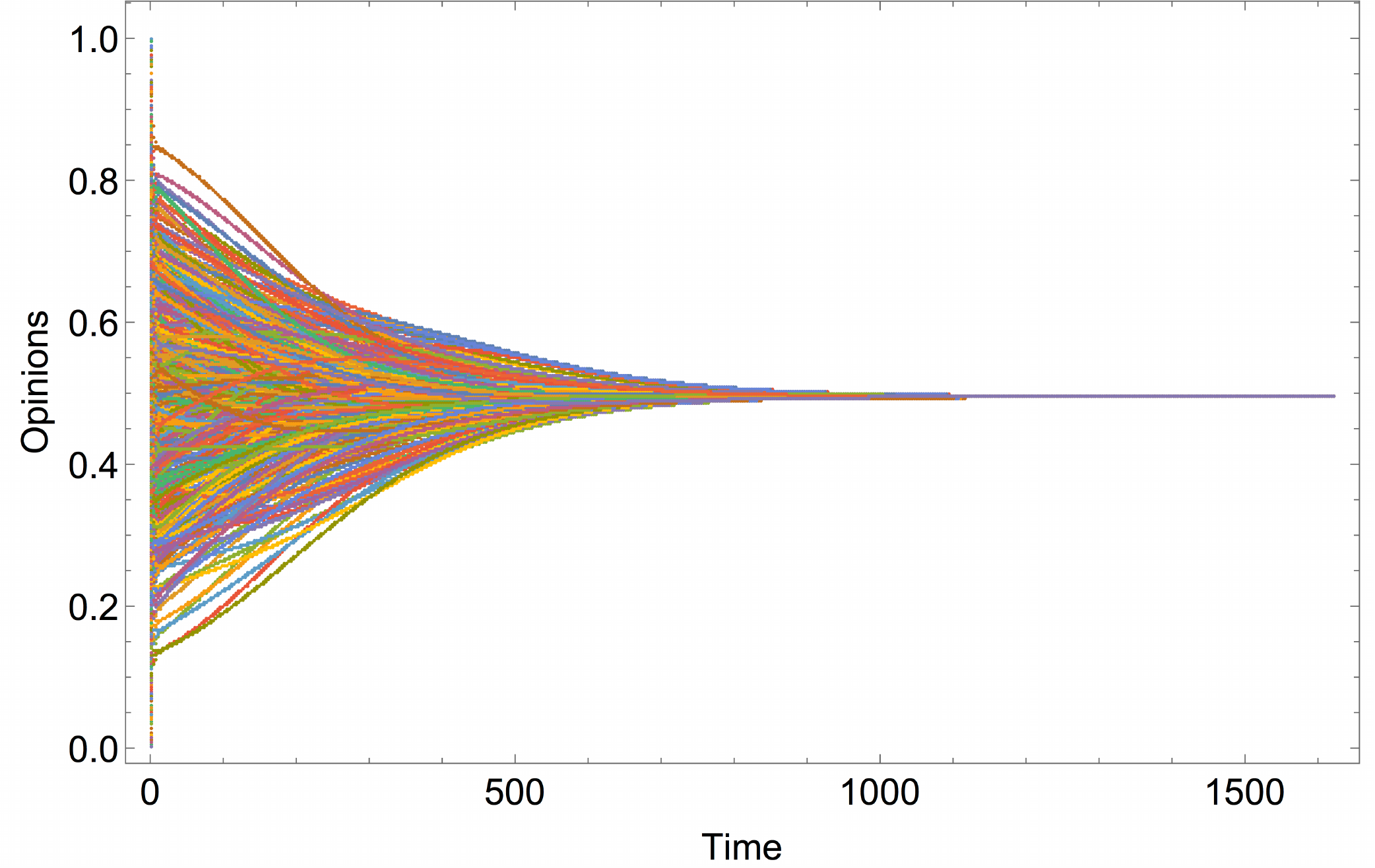}}
\subfigure[Convergence Ratio]{\includegraphics[scale=0.3]{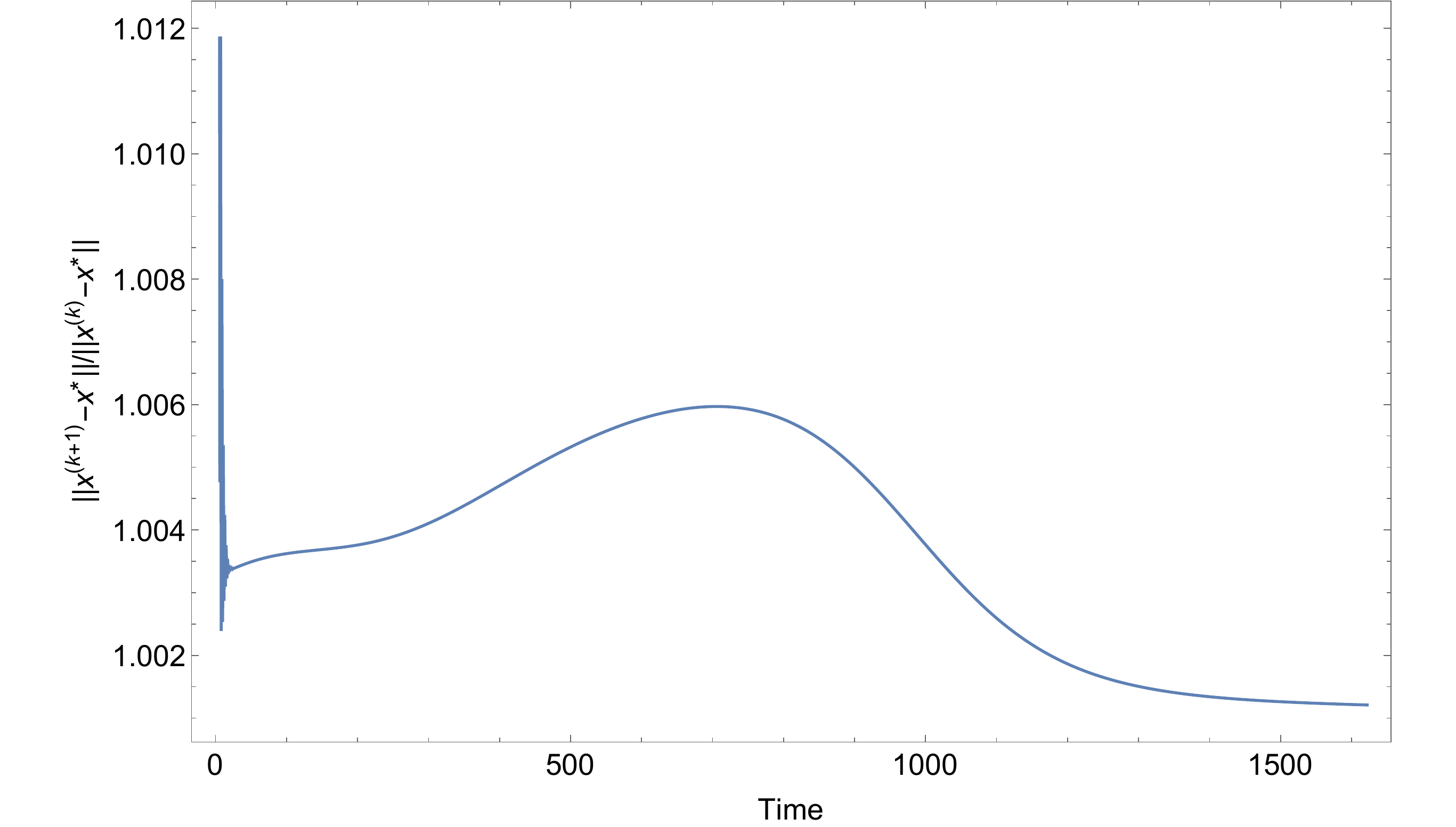}}
\caption{(a)Opinion convergence is illustrated in a scale free network. (b) Linear convergence is demonstrated by showing the ratio ${\left\lVert\mathbf{x}^{(k+1)} - \mathbf{x}^*\right\rVert}/{\left\lVert\mathbf{x}^{(k)} - \mathbf{x}^*\right\rVert}$ converges to 1 as expected.}
\label{fig:BigGraphConvergence}
\end{figure}
Notice the ratio ${\left\lVert\mathbf{x}^{(k+1)} - \mathbf{x}^*\right\rVert}/{\left\lVert\mathbf{x}^{(k)} - \mathbf{x}^*\right\rVert}$ approaches $1$ as expected.

%%%%%%%%%%%%%%%%%%%%%%%%%%%%
\section{Cost of Anarchy}\label{sec:anarchy}
\cite{Bindel2015248} observe that simultaneous minimization of Eq. (\ref{eqn:SocialStress}) is a game-theoretic problem and compare the total social utility in a centralized solution to a decentralized solution (Nash equilibrium); i.e., they compute a price of anarchy \cite{Rough03,Bindel2015248}. To analyze the price of anarchy of this system, we cannot use the utility function in Eq. (\ref{eqn:GlobalUtility}), as the $U^{(k)}(\lim_{k \to \infty} \mathbf{x}^{(k)}) \to 0$ when $\rho^{(k)} \to \infty$. Instead, we use a total utility function $U_T(\mathbf{x}) = \lim_{k\to \infty} U^{(k)}(\mathbf{x})$ to compute the cost of anarchy:
\begin{theorem}\label{optimality}
 The convergent point $\lim_{k \to \infty} \mathbf{x}^{(k)}$ minimizes total utility if and only if $\lim_{k \to \infty} \rho^{k} = \infty$.
\end{theorem}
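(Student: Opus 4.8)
The plan is to compute the total utility $U_T(\mathbf{x}) = \lim_{k\to\infty} U^{(k)}(\mathbf{x})$ in closed form for each peer-pressure regime and then compare its minimizer against the convergent point delivered by Theorems \ref{thm:Convergence} and \ref{finitepressure}. Writing Eq.\ (\ref{eqn:GlobalUtility}) as
\begin{displaymath}
U^{(k)}(\mathbf{x}) = (\mathbf{x}-\mathbf{x}^+)^T\mathbf{S}(\mathbf{x}-\mathbf{x}^+) + 2\rho^{(k)}\mathbf{x}^T\mathbf{L}\mathbf{x},
\end{displaymath}
the limiting behavior of the two terms is controlled entirely by the Laplacian quadratic form $\mathbf{x}^T\mathbf{L}\mathbf{x}$, which (since $G$ is connected) vanishes exactly on the consensus line $\mathbf{x}=c\mathbf{1}$ and is strictly positive elsewhere.

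For the ``if'' direction I would take $\rho^{(k)}\to\infty$. Then $U_T(\mathbf{x})=+\infty$ whenever $\mathbf{x}$ is not a consensus, while on the line $\mathbf{x}=c\mathbf{1}$ the peer-pressure term vanishes identically and $U_T(c\mathbf{1}) = c^2\sum_i s_i - 2c\sum_i s_i x_i^+ + \sum_i s_i (x_i^+)^2$. This scalar quadratic is minimized at $c = (\sum_i s_i x_i^+)/(\sum_i s_i)$, giving exactly the point $\mathbf{x}^*$ of Theorem \ref{thm:Convergence}; hence the convergent point is the global minimizer of $U_T$.

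For the ``only if'' direction I would take $\rho^{(k)}\to\rho^*<\infty$. Here $U_T(\mathbf{x}) = \mathbf{x}^T(\mathbf{S}+2\rho^*\mathbf{L})\mathbf{x} - 2\mathbf{x}^T\mathbf{S}\mathbf{x}^+ + (\mathbf{x}^+)^T\mathbf{S}\mathbf{x}^+$ is finite, and repeating the argument of Lemma \ref{invmat} with $2\rho^*>0$ shows its Hessian $2(\mathbf{S}+2\rho^*\mathbf{L})$ is positive definite, so $U_T$ is strictly convex with a unique minimizer $(\mathbf{S}+2\rho^*\mathbf{L})^{-1}\mathbf{S}\mathbf{x}^+$. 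By contrast, Theorem \ref{finitepressure} identifies the convergent point as $(\mathbf{S}+\rho^*\mathbf{L})^{-1}\mathbf{S}\mathbf{x}^+$: the coefficient of $\mathbf{L}$ is $\rho^*$ rather than $2\rho^*$, and this discrepancy is what drives the two points apart.

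The main obstacle is making that last disagreement rigorous, since the two formulas could in principle coincide. Setting $\mathbf{u}=(\mathbf{S}+\rho^*\mathbf{L})^{-1}\mathbf{S}\mathbf{x}^+$ and multiplying through by $(\mathbf{S}+2\rho^*\mathbf{L})$, a short manipulation reduces the equality of minimizer and convergent point to the condition $\rho^*\mathbf{L}\mathbf{u}=\mathbf{0}$, i.e.\ to $\mathbf{u}$ itself being a consensus vector. This can happen only when the innate preferences $x_i^+$ of the stubborn agents (those with $s_i>0$) are all equal, a degenerate configuration. I would therefore state that nondegeneracy assumption explicitly; under it $\mathbf{L}\mathbf{u}\neq\mathbf{0}$, the convergent point strictly fails to minimize $U_T$, and the equivalence is complete.
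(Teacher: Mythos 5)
Your proof is correct, and its second half takes a genuinely different---and in one respect more careful---route than the paper's. The ``if'' direction coincides with the paper's argument: off the consensus line $U_T=+\infty$ because $\mathbf{x}^T\mathbf{L}\mathbf{x}>0$ on a connected graph, and on the line the scalar quadratic $\sum_i s_i(c-x_i^+)^2$ is minimized at the stubbornness-weighted average, which Theorem~\ref{thm:Convergence} identifies as the convergent point. For the ``only if'' direction, however, the paper computes nothing: it observes that $U_T$ for finite $\rho^*$ is the objective of Bindel et al.\ with edge weights scaled by $\rho^*$, notes that the convergent point is that game's Nash equilibrium, and imports from that reference the conclusion that the equilibrium is not optimal. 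You instead exhibit the unique minimizer $(\mathbf{S}+2\rho^*\mathbf{L})^{-1}\mathbf{S}\mathbf{x}^+$ explicitly (strict convexity via Lemma~\ref{invmat} applied with coefficient $2\rho^*$), compare it with the convergent point $(\mathbf{S}+\rho^*\mathbf{L})^{-1}\mathbf{S}\mathbf{x}^+$ from Theorem~\ref{finitepressure}, and reduce their equality to $\rho^*\mathbf{L}\mathbf{u}=\mathbf{0}$. This buys two things: the argument is self-contained, and it exposes that the ``only if'' statement is literally false in the degenerate case where all agents with $s_i>0$ share the same innate preference---there the convergent point is already a consensus and minimizes $U_T$ even for finite $\rho^*$. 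The paper's citation-based argument silently inherits the same exception (in Bindel et al.'s setting the Nash equilibrium does coincide with the optimum when all innate opinions agree), so your explicit nondegeneracy hypothesis is the right repair rather than an artificial restriction; the only cost is the short factor-of-two manipulation that the paper avoids by outsourcing the claim.
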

\begin{proof}
If $\rho^{(k)}$ converges to a finite number $\rho^*$, then the total utility is
$$U_T(x) = \mathbf{x}^{T}(\mathbf{S} + 2\rho^*\mathbf{L})\mathbf{x} - 2\mathbf{x}^{T}\mathbf{S} \mathbf{x}^{+} + (\mathbf{x}^{+})^{T}\mathbf{S} \mathbf{x}^{+}$$
Note that this is identical to the work in \cite{Bindel2015248}, except with edge weights multiplied by $\rho^*$. We note that $\lim_{k \to \infty} \mathbf{x}^{(k)}$ is the Nash Equilibrium used in \cite{Bindel2015248}. From the work in \cite{Bindel2015248} we may conclude the convergent point is not optimal for finite $\rho^*$.

If $\lim_{k \to \infty} \rho^{k} = \infty$, then if $\mathbf{x} \neq c\mathbf{1}$ for some constant $c$, then $\mathbf{x}^{T}\mathbf{L}\mathbf{x} > 0$, so $U^{(k)}(\mathbf{x})$ grows without bound. However, for any $k$, we have that $U^{(k)}(c\mathbf{1}) = \sum_{i=1}^n s_i(c - x_i^+)^2$, so $U_T(c\mathbf{1}) = \sum_{i=1}^n s_i(c - x_i^+)^2$. By first order necessary conditions of optimality:
 $$\frac{\sum_{i=1}^{n} s_{i}x_{i}^{+}}{\sum_{i=1}^{n} s_{i}}{\mathbf 1}$$ 
 minimizes $U(\mathbf{x})$, and thus $\lim_{k \to \infty} \mathbf{x}^{(k)}$ is optimal.
\end{proof}
This gives the following trivial corollary, which is consistent with the work in \cite{Bindel2015248}.
\begin{corollary} The cost of anarchy is $1$ if and only if $\lim_{k \to \infty} \rho^{(k)} = \infty$. \hfill\qed
\end{corollary}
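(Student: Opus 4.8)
The plan is to read the corollary off directly from Theorem \ref{optimality}, since by definition the cost (price) of anarchy equals $1$ precisely when the decentralized solution is no worse than the centralized optimum.

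First I would recall the definition. The cost of anarchy is the ratio
\begin{displaymath}
\mathrm{CoA} = \frac{U_T\left(\lim_{k \to \infty} \mathbf{x}^{(k)}\right)}{\min_{\mathbf{x}} U_T(\mathbf{x})},
\end{displaymath}
where the numerator is the total utility at the Nash equilibrium $\lim_{k \to \infty}\mathbf{x}^{(k)}$ (identified as such in the proof of Theorem \ref{optimality}) and the denominator is the total utility at the social optimum, i.e. the minimizer of $U_T$. Because the denominator is by construction the minimum value of $U_T$, the numerator always dominates it, so $\mathrm{CoA} \geq 1$, with equality holding exactly when the Nash point attains the minimum of $U_T$ — that is, exactly when the convergent point minimizes total utility.

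Next I would invoke Theorem \ref{optimality}, which asserts that the convergent point minimizes total utility if and only if $\lim_{k \to \infty}\rho^{(k)} = \infty$. Chaining this equivalence with the observation of the previous paragraph yields
\begin{displaymath}
\mathrm{CoA} = 1 \iff \lim_{k \to \infty}\rho^{(k)} = \infty,
\end{displaymath}
which is the claim.

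There is essentially no obstacle here; this is why the statement is labelled a trivial corollary. The only point meriting a line of care is the step identifying $\mathrm{CoA} = 1$ with the Nash point being a genuine minimizer of $U_T$ (rather than merely tying in value with a distinct minimizer). This is immediate once one notes that $U_T$ is convex with a well-defined minimum value in both regimes: for finite $\rho^*$ the Hessian $\mathbf{S} + 2\rho^*\mathbf{L}$ is positive definite by Lemma \ref{invmat}, giving a unique minimizer, while for $\rho^{(k)} \to \infty$ the limiting $U_T$ is finite only on the consensus line $\mathbf{x} = c\mathbf{1}$ and is minimized there at the stubbornness-weighted average. In either case equality of the ratio to $1$ is equivalent to the Nash utility equalling that minimum value, so the reduction to Theorem \ref{optimality} is exact.
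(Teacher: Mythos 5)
Your proposal is correct and follows exactly the route the paper intends: the paper states this as a trivial corollary of Theorem \ref{optimality} with no written proof, and your argument—unpacking the definition of the price of anarchy as a ratio, noting it equals $1$ precisely when the Nash/convergent point attains the minimum of $U_T$, and then chaining with the theorem's equivalence—is that intended derivation made explicit. Your additional care about the minimizer being well-defined in both regimes (positive definiteness of $\mathbf{S} + 2\rho^*\mathbf{L}$ for finite $\rho^*$, and finiteness of $U_T$ only on the consensus line when $\rho^{(k)} \to \infty$) is a harmless and reasonable strengthening of what the paper leaves implicit.
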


\section{Empirical Analysis} \label{sec:experiment}
The hypothesis of increasing peer pressure in social settings underlies this work. We attempt to (in)validate the hypothesis that peer-pressure does increase in real-world systems, by using data from the well-known \textit{Social Evolution Experiment} \cite{madan2012}. The experiment tracked the everyday life of approximately 80 students in an undergraduate dormitory over 6 months using mobile phones and surveys, in order to mine spatio-temporal behavioral patterns and the co-evolution of individual behaviors and social network structure.

The dataset includes proximity, location, and call logs, collected through a mobile application. Also included are sociometric survey data for relationships,  political opinions, recent smoking behavior, attitudes towards exercise and fitness, attitudes towards diet, attitudes towards academic performance, current confidence and anxiety level, and musical tastes. 

The derived social network graph (shown Fig. \ref{fig:SocialNet}) represents each student as a node; an edge is present between two nodes if either student noted any level of interaction during the surveys. Edge weights were derived based on the level of interaction recorded between the students in the surveys, as well as the number of surveys in which the interaction appeared. 
\begin{figure}[htbp]
\centering
\subfigure[Graph Layout]{\includegraphics[scale=0.3]{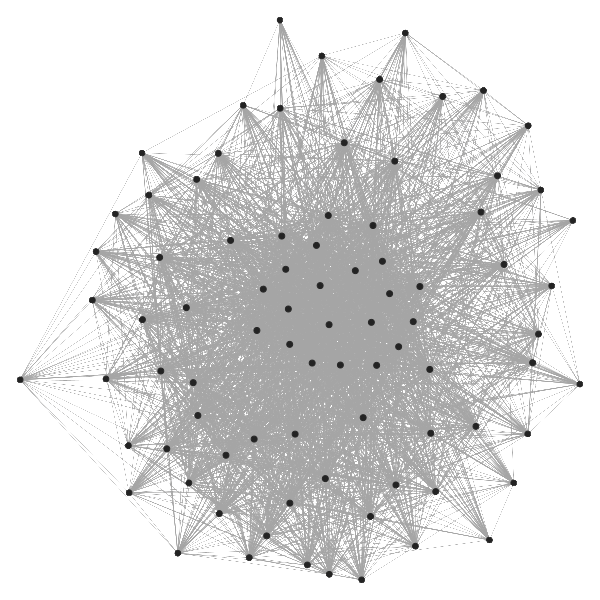}}
\subfigure[Degree Distribution]{\includegraphics[scale=0.35]{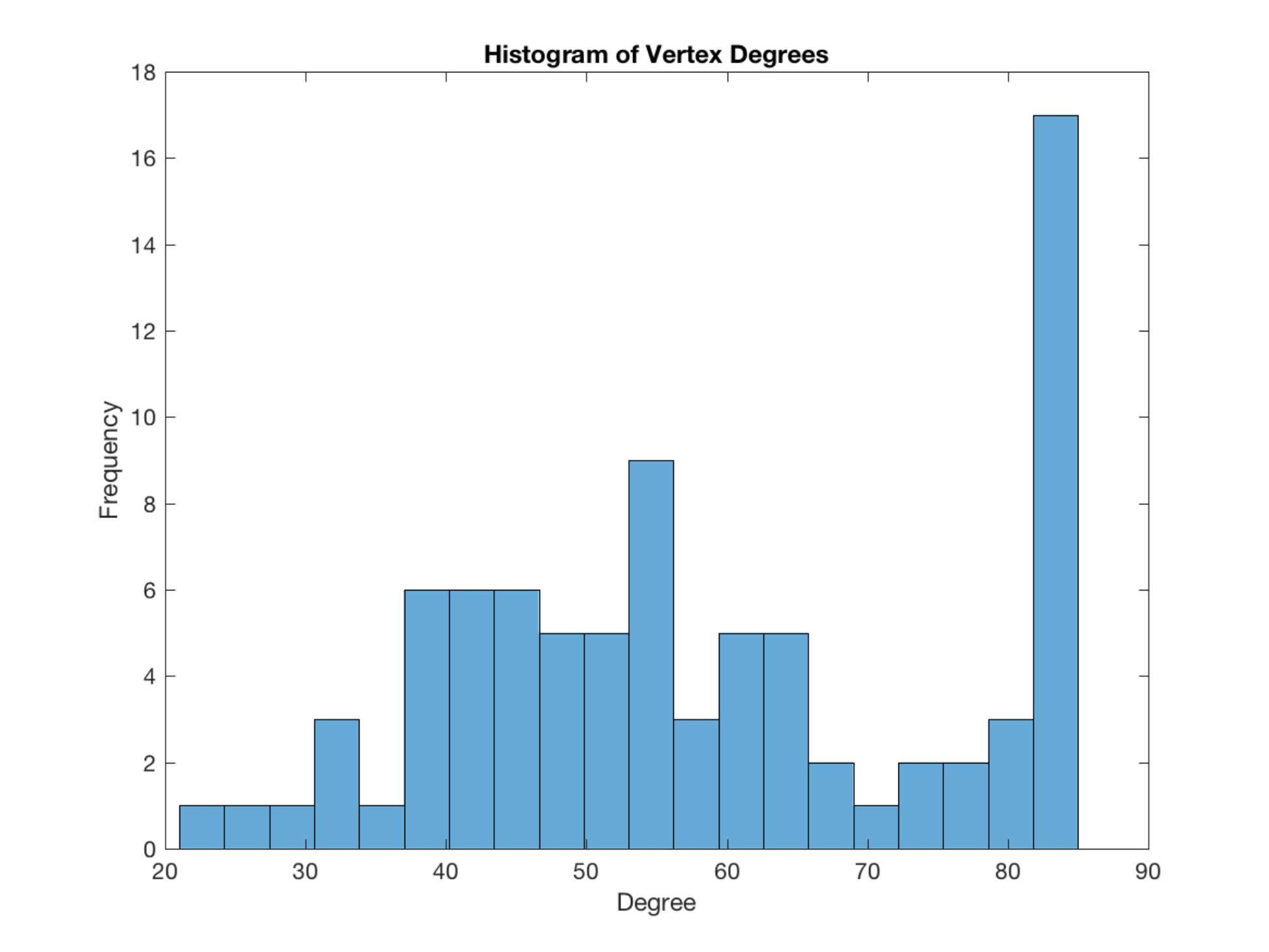}}
\caption{Social network of the political data set showing a high-degree of social connectedness.}
\label{fig:SocialNet}
\end{figure}
We note that this graph is not scale-free, as is typical of social networks. This may be a result of the size of the network, collection bias or simply representative of this social network. As a consequence, it is dense.

Political opinion was modeled on a $[0,1]$ scale, with lower numbers representing Republican preferences and higher numbers representing Democratic preferences. Individual scores were assigned based on reported political party, preferred candidate and likelihood of voting (prior to the election), as well as who they voted for and their approval rating of Barack Obama (after the election). Appendix \ref{sec:AppendixA} contains the code used to set these preferences. Each month's survey was examined individually to put together a monthly time line of each person's political views. The results of the first survey were used as proxy for their inherent personal preference, prior to peer influence.

Finally, individual stubbornness/lack of susceptibility to peer pressure was approximated using reported interest in politics on the first survey administered, as well as stated likelihood of voting. These survey questions were independent of those used in determining political preferences. Appendix \ref{sec:AppendixB} contains the code used to set stubbornness.

Given a list of $\rho^{(k)}$, students' preferences were simulated by aligning each iteration of play to one day in the survey period. The simulated preferences were compared to the surveyed preferences at each month, and the distances between the vectors were summed to get a single score for each list of $\rho^{(k)}$. 

This function of $\rho^{(k)}$ was minimized with fminsearch in Matlab and the best-fit peer pressure values were found to be increasing, with a best fit line $\rho^{(k)} = 1.06*k - 11.96$ and $r^{2} = .9886$ (see Fig. \ref{fig:FitLines}).
\begin{figure}[htbp]
\centering
\includegraphics[scale=0.3]{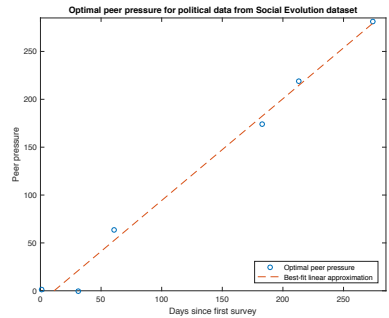}
\caption{Fit of estimated peer pressure showing a clear increase over time, validating the primary hypothesis of the paper that individual consensus occurs because peer pressure increases on each round.}
\label{fig:FitLines}
\end{figure}
The inferred increasing in peer pressure is consistent with the underlying hypothesis of the paper, under the assumption that the process of repeated opinion averaging with stubbornness is a valid model of human behavior. We discuss this further in Section \ref{sec:Conclusion}.

\section{Conclusion}\label{sec:Conclusion}
In this paper we study an opinion formation model under the presence of increasing peer pressure. As in earlier work, we consider agents whose opinion is affected by unchanging innate beliefs. In this paper, the relative strength of these innate beliefs may vary from agent to agent. We show that in the case of unbounded peer-pressure, opinion consensus to a weighted average of innate beliefs is ensured. We also consider the case when peer-pressure is increasing, but bounded. Simulation suggests a numerically slow convergence, which is explained by showing the system dynamics converge to gradient descent applied to a certain convex function. Using this observation we show that that convergence is linear. We evaluate our hypothesis that peer-pressure increases in real world closed systems by fitting our model to a live data set. 

We note that that the assumption of a non-constant (and increasing) peer-pressure coefficient can help mitigate the fast initial convergence of this class of models. It is rare in the real-world to see dramatic opinion shifts over extremely short time scales. Such dramatic shifts are consistent with a gradient descent. However, by varying peer-pressure, the gradient descent can be controlled, leading to more consistency with real-world phenomena as illustrated.

In future work, the limitation that the network is undirected and symmetric should be removed to account for asymmetric social influence. In addition, the network is assumed to remain static during the convergence process, with connections independent of the agents' opinions. Sufficiently different opinions could cause enough stress between agents so as to cause them to reduce influence or even sever the tie between them. A dynamic network model as in \cite{CGT16} could accommodate this kind of network update. Finally, it would be interesting to study corresponding control problems, in which we are given a $\bar{\mathbf{x}}$, the desired convergence point and we can control a subset of agents reporting values ($x_i^{(k)}$), stubbornness ($s_i$) or initial value ($x_i^+$) to determine conditions under which opinion steering is possible. This problem becomes more interesting if the other agents attempt to determine whether certain agents are intentionally attempting to manipulate the opinion value $\mathbf{x}^{(k)}$. Of equal interest is the transient control problem in which $\mathbf{x}^{(k)}$ is steered through a set $X \subset \mathbb{R}^n$ under the assumption that external factors will prevent convergence in the long-run.

\section*{Acknowledgement}
All authors were supported in part by the Army Research Office, under Grant W911NF-13-1-0271. A portion of CG's work was supported by the National Science Foundation under grant number CMMI-1463482. A portion of AS's work was supported by the National Science foundation under grant number 1453080.

\appendix
\section{Initial Condition Code}\label{sec:AppendixA}
The Matlab code below sets the initial preferences ($\mathbf{x}^+$) in this experiment.
\lstinputlisting{InferPrefs.m}

\section{Stubbornness Setting Code}\label{sec:AppendixB}
The Matlab code below set the stubbornness coefficients ($\mathbf{s}$) in this experiment.
\lstinputlisting{PoliticalStubbornness.m}

\bibliographystyle{apsrev4-1}
\bibliography{mrabbrev,bibliography}

%merlin.mbs apsrev4-1.bst 2010-07-25 4.21a (PWD, AO, DPC) hacked
%Control: key (0)
%Control: author (72) initials jnrlst
%Control: editor formatted (1) identically to author
%Control: production of article title (-1) disabled
%Control: page (0) single
%Control: year (1) truncated
%Control: production of eprint (0) enabled
\begin{thebibliography}{47}%
\makeatletter
\providecommand \@ifxundefined [1]{%
 \@ifx{#1\undefined}
}%
\providecommand \@ifnum [1]{%
 \ifnum #1\expandafter \@firstoftwo
 \else \expandafter \@secondoftwo
 \fi
}%
\providecommand \@ifx [1]{%
 \ifx #1\expandafter \@firstoftwo
 \else \expandafter \@secondoftwo
 \fi
}%
\providecommand \natexlab [1]{#1}%
\providecommand \enquote  [1]{``#1''}%
\providecommand \bibnamefont  [1]{#1}%
\providecommand \bibfnamefont [1]{#1}%
\providecommand \citenamefont [1]{#1}%
\providecommand \href@noop [0]{\@secondoftwo}%
\providecommand \href [0]{\begingroup \@sanitize@url \@href}%
\providecommand \@href[1]{\@@startlink{#1}\@@href}%
\providecommand \@@href[1]{\endgroup#1\@@endlink}%
\providecommand \@sanitize@url [0]{\catcode `\\12\catcode `\$12\catcode
  `\&12\catcode `\#12\catcode `\^12\catcode `\_12\catcode `\%12\relax}%
\providecommand \@@startlink[1]{}%
\providecommand \@@endlink[0]{}%
\providecommand \url  [0]{\begingroup\@sanitize@url \@url }%
\providecommand \@url [1]{\endgroup\@href {#1}{\urlprefix }}%
\providecommand \urlprefix  [0]{URL }%
\providecommand \Eprint [0]{\href }%
\providecommand \doibase [0]{http://dx.doi.org/}%
\providecommand \selectlanguage [0]{\@gobble}%
\providecommand \bibinfo  [0]{\@secondoftwo}%
\providecommand \bibfield  [0]{\@secondoftwo}%
\providecommand \translation [1]{[#1]}%
\providecommand \BibitemOpen [0]{}%
\providecommand \bibitemStop [0]{}%
\providecommand \bibitemNoStop [0]{.\EOS\space}%
\providecommand \EOS [0]{\spacefactor3000\relax}%
\providecommand \BibitemShut  [1]{\csname bibitem#1\endcsname}%
\let\auto@bib@innerbib\@empty
%</preamble>
\bibitem [{\citenamefont {DeGroot}(1974)}]{DeGroot74}%
  \BibitemOpen
  \bibfield  {author} {\bibinfo {author} {\bibfnamefont {M.~H.}\ \bibnamefont
  {DeGroot}},\ }\href@noop {} {\bibfield  {journal} {\bibinfo  {journal} {J.
  American Stat. Association}\ }\textbf {\bibinfo {volume} {69}},\ \bibinfo
  {pages} {118} (\bibinfo {year} {1974})}\BibitemShut {NoStop}%
\bibitem [{\citenamefont {Friedkin}\ and\ \citenamefont
  {Johnsen}(1990)}]{FJ90}%
  \BibitemOpen
  \bibfield  {author} {\bibinfo {author} {\bibfnamefont {N.~E.}\ \bibnamefont
  {Friedkin}}\ and\ \bibinfo {author} {\bibfnamefont {E.~C.}\ \bibnamefont
  {Johnsen}},\ }\href {\doibase 10.1080/0022250X.1990.9990069} {\bibfield
  {journal} {\bibinfo  {journal} {The Journal of Mathematical Sociology}\
  }\textbf {\bibinfo {volume} {15}},\ \bibinfo {pages} {193} (\bibinfo {year}
  {1990})},\ \Eprint
  {http://arxiv.org/abs/http://dx.doi.org/10.1080/0022250X.1990.9990069}
  {http://dx.doi.org/10.1080/0022250X.1990.9990069} \BibitemShut {NoStop}%
\bibitem [{\citenamefont {Krause}(2000)}]{Krause00}%
  \BibitemOpen
  \bibfield  {author} {\bibinfo {author} {\bibfnamefont {U.}~\bibnamefont
  {Krause}},\ }in\ \href@noop {} {\emph {\bibinfo {booktitle} {In
  Communications in Difference Equations}}},\ \bibinfo {editor} {edited by\
  \bibinfo {editor} {\bibnamefont {Gordon}}\ and\ \bibinfo {editor}
  {\bibnamefont {Breach}}}\ (\bibinfo {year} {2000})\ pp.\ \bibinfo {pages}
  {227-- 236}\BibitemShut {NoStop}%
\bibitem [{\citenamefont {Hegselmann}\ and\ \citenamefont
  {Krause}(2002)}]{HK02}%
  \BibitemOpen
  \bibfield  {author} {\bibinfo {author} {\bibfnamefont {R.}~\bibnamefont
  {Hegselmann}}\ and\ \bibinfo {author} {\bibfnamefont {U.}~\bibnamefont
  {Krause}},\ }\href@noop {} {\bibfield  {journal} {\bibinfo  {journal} {J.
  Artificial Soc. Social Simul.}\ }\textbf {\bibinfo {volume} {5}} (\bibinfo
  {year} {2002})}\BibitemShut {NoStop}%
\bibitem [{\citenamefont {Slanina}\ and\ \citenamefont {Lavicka}(2003)}]{SL03}%
  \BibitemOpen
  \bibfield  {author} {\bibinfo {author} {\bibfnamefont {F.}~\bibnamefont
  {Slanina}}\ and\ \bibinfo {author} {\bibfnamefont {H.}~\bibnamefont
  {Lavicka}},\ }\href {\doibase 10.1140/epjb/e2003-00278-0} {\bibfield
  {journal} {\bibinfo  {journal} {The European Physical Journal B - Condensed
  Matter and Complex Systems}\ }\textbf {\bibinfo {volume} {35}},\ \bibinfo
  {pages} {279} (\bibinfo {year} {2003})}\BibitemShut {NoStop}%
\bibitem [{\citenamefont {Ben-Naim}(2005)}]{BN05}%
  \BibitemOpen
  \bibfield  {author} {\bibinfo {author} {\bibfnamefont {E.}~\bibnamefont
  {Ben-Naim}},\ }\href@noop {} {\bibfield  {journal} {\bibinfo  {journal}
  {Europhys. Lett.}\ }\textbf {\bibinfo {volume} {69}},\ \bibinfo {pages} {671}
  (\bibinfo {year} {2005})}\BibitemShut {NoStop}%
\bibitem [{\citenamefont {Weisbuch}\ \emph {et~al.}(2005)\citenamefont
  {Weisbuch}, \citenamefont {Deffuant},\ and\ \citenamefont {Amblard}}]{WDA05}%
  \BibitemOpen
  \bibfield  {author} {\bibinfo {author} {\bibfnamefont {G.}~\bibnamefont
  {Weisbuch}}, \bibinfo {author} {\bibfnamefont {G.}~\bibnamefont {Deffuant}},
  \ and\ \bibinfo {author} {\bibfnamefont {F.}~\bibnamefont {Amblard}},\
  }\href@noop {} {\bibfield  {journal} {\bibinfo  {journal} {Physica A}\
  }\textbf {\bibinfo {volume} {353}} (\bibinfo {year} {2005})}\BibitemShut
  {NoStop}%
\bibitem [{\citenamefont {Toscani}(2006)}]{Toscani06}%
  \BibitemOpen
  \bibfield  {author} {\bibinfo {author} {\bibfnamefont {G.}~\bibnamefont
  {Toscani}},\ }\href@noop {} {\bibfield  {journal} {\bibinfo  {journal}
  {Commun. Math. Sci.}\ }\textbf {\bibinfo {volume} {4}},\ \bibinfo {pages}
  {481} (\bibinfo {year} {2006})}\BibitemShut {NoStop}%
\bibitem [{\citenamefont {Weisbuch}(2006)}]{Weisb06}%
  \BibitemOpen
  \bibfield  {author} {\bibinfo {author} {\bibfnamefont {G.}~\bibnamefont
  {Weisbuch}},\ }in\ \href@noop {} {\emph {\bibinfo {booktitle} {Econophysics
  and Sociophysics: Trends and Perspectives}}},\ \bibinfo {editor} {edited by\
  \bibinfo {editor} {\bibfnamefont {B.~K.}\ \bibnamefont {Chakrabarti}},
  \bibinfo {editor} {\bibfnamefont {A.}~\bibnamefont {Chakrabarti}}, \ and\
  \bibinfo {editor} {\bibfnamefont {A.}~\bibnamefont {Chatterjee}}}\ (\bibinfo
  {publisher} {Wiley},\ \bibinfo {year} {2006})\ pp.\ \bibinfo {pages}
  {67--94}\BibitemShut {NoStop}%
\bibitem [{\citenamefont {Lorenz}(2007)}]{Lorenz07}%
  \BibitemOpen
  \bibfield  {author} {\bibinfo {author} {\bibfnamefont {J.}~\bibnamefont
  {Lorenz}},\ }\href@noop {} {\bibfield  {journal} {\bibinfo  {journal}
  {Internat. J. Modern Phys. C}\ }\textbf {\bibinfo {volume} {18}},\ \bibinfo
  {pages} {1819} (\bibinfo {year} {2007})}\BibitemShut {NoStop}%
\bibitem [{\citenamefont {Blondel}\ \emph {et~al.}(2009)\citenamefont
  {Blondel}, \citenamefont {Hendrickx},\ and\ \citenamefont
  {Tsitsiklis}}]{BHT09}%
  \BibitemOpen
  \bibfield  {author} {\bibinfo {author} {\bibfnamefont {V.~D.}\ \bibnamefont
  {Blondel}}, \bibinfo {author} {\bibfnamefont {J.~M.}\ \bibnamefont
  {Hendrickx}}, \ and\ \bibinfo {author} {\bibfnamefont {J.~N.}\ \bibnamefont
  {Tsitsiklis}},\ }\href {\doibase 10.1109/TAC.2009.2031211} {\bibfield
  {journal} {\bibinfo  {journal} {IEEE Transactions on Automatic Control}\
  }\textbf {\bibinfo {volume} {54}},\ \bibinfo {pages} {2586} (\bibinfo {year}
  {2009})}\BibitemShut {NoStop}%
\bibitem [{\citenamefont {Castellano}\ \emph {et~al.}(2009)\citenamefont
  {Castellano}, \citenamefont {Fortunato},\ and\ \citenamefont
  {Loreto}}]{CFL09}%
  \BibitemOpen
  \bibfield  {author} {\bibinfo {author} {\bibfnamefont {C.}~\bibnamefont
  {Castellano}}, \bibinfo {author} {\bibfnamefont {S.}~\bibnamefont
  {Fortunato}}, \ and\ \bibinfo {author} {\bibfnamefont {V.}~\bibnamefont
  {Loreto}},\ }\href@noop {} {\bibfield  {journal} {\bibinfo  {journal} {Rev.
  Modern Phys.}\ }\textbf {\bibinfo {volume} {81}},\ \bibinfo {pages} {591}
  (\bibinfo {year} {2009})}\BibitemShut {NoStop}%
\bibitem [{\citenamefont {Kurz}\ and\ \citenamefont {Rambau}(2011)}]{KR11}%
  \BibitemOpen
  \bibfield  {author} {\bibinfo {author} {\bibfnamefont {S.}~\bibnamefont
  {Kurz}}\ and\ \bibinfo {author} {\bibfnamefont {J.}~\bibnamefont {Rambau}},\
  }\href@noop {} {\bibfield  {journal} {\bibinfo  {journal} {J. Difference Equ.
  Appl.}\ }\textbf {\bibinfo {volume} {17}},\ \bibinfo {pages} {859} (\bibinfo
  {year} {2011})}\BibitemShut {NoStop}%
\bibitem [{\citenamefont {Duering}\ \emph {et~al.}(2012)\citenamefont
  {Duering}, \citenamefont {Markowich}, \citenamefont {Pietschmann},\ and\
  \citenamefont {Wolfram}}]{DMPW12}%
  \BibitemOpen
  \bibfield  {author} {\bibinfo {author} {\bibfnamefont {B.}~\bibnamefont
  {Duering}}, \bibinfo {author} {\bibfnamefont {P.}~\bibnamefont {Markowich}},
  \bibinfo {author} {\bibfnamefont {J.~F.}\ \bibnamefont {Pietschmann}}, \ and\
  \bibinfo {author} {\bibfnamefont {M.~T.}\ \bibnamefont {Wolfram}},\
  }\href@noop {} {\bibfield  {journal} {\bibinfo  {journal} {Proc. R. Soc.
  Lond. Ser. A}\ }\textbf {\bibinfo {volume} {465}} (\bibinfo {year}
  {2012})}\BibitemShut {NoStop}%
\bibitem [{\citenamefont {Canuto}\ \emph {et~al.}(2012)\citenamefont {Canuto},
  \citenamefont {Fagnani},\ and\ \citenamefont {Tilli}}]{CFT12}%
  \BibitemOpen
  \bibfield  {author} {\bibinfo {author} {\bibfnamefont {C.}~\bibnamefont
  {Canuto}}, \bibinfo {author} {\bibfnamefont {F.}~\bibnamefont {Fagnani}}, \
  and\ \bibinfo {author} {\bibfnamefont {P.}~\bibnamefont {Tilli}},\
  }\href@noop {} {\bibfield  {journal} {\bibinfo  {journal} {SIAM J. Contr. and
  Opt.}\ ,\ \bibinfo {pages} {243}} (\bibinfo {year} {2012})}\BibitemShut
  {NoStop}%
\bibitem [{\citenamefont {Jabin}\ and\ \citenamefont {Motsch}(2014)}]{JM14}%
  \BibitemOpen
  \bibfield  {author} {\bibinfo {author} {\bibfnamefont {P.-E.}\ \bibnamefont
  {Jabin}}\ and\ \bibinfo {author} {\bibfnamefont {S.}~\bibnamefont {Motsch}},\
  }\href {\doibase https://doi.org/10.1016/j.jde.2014.08.005} {\bibfield
  {journal} {\bibinfo  {journal} {Journal of Differential Equations}\ }\textbf
  {\bibinfo {volume} {257}},\ \bibinfo {pages} {4165 } (\bibinfo {year}
  {2014})}\BibitemShut {NoStop}%
\bibitem [{\citenamefont {Dandekar}\ \emph {et~al.}(2013)\citenamefont
  {Dandekar}, \citenamefont {Goel},\ and\ \citenamefont
  {Lee}}]{Dandekar09042013}%
  \BibitemOpen
  \bibfield  {author} {\bibinfo {author} {\bibfnamefont {P.}~\bibnamefont
  {Dandekar}}, \bibinfo {author} {\bibfnamefont {A.}~\bibnamefont {Goel}}, \
  and\ \bibinfo {author} {\bibfnamefont {D.~T.}\ \bibnamefont {Lee}},\ }\href
  {\doibase 10.1073/pnas.1217220110} {\bibfield  {journal} {\bibinfo  {journal}
  {Proceedings of the National Academy of Sciences}\ }\textbf {\bibinfo
  {volume} {110}},\ \bibinfo {pages} {5791} (\bibinfo {year} {2013})},\ \Eprint
  {http://arxiv.org/abs/http://www.pnas.org/content/110/15/5791.full.pdf}
  {http://www.pnas.org/content/110/15/5791.full.pdf} \BibitemShut {NoStop}%
\bibitem [{\citenamefont {Bindel}\ \emph {et~al.}(2015)\citenamefont {Bindel},
  \citenamefont {Kleinberg},\ and\ \citenamefont {Oren}}]{Bindel2015248}%
  \BibitemOpen
  \bibfield  {author} {\bibinfo {author} {\bibfnamefont {D.}~\bibnamefont
  {Bindel}}, \bibinfo {author} {\bibfnamefont {J.}~\bibnamefont {Kleinberg}}, \
  and\ \bibinfo {author} {\bibfnamefont {S.}~\bibnamefont {Oren}},\ }\href
  {\doibase https://doi.org/10.1016/j.geb.2014.06.004} {\bibfield  {journal}
  {\bibinfo  {journal} {Games and Economic Behavior}\ }\textbf {\bibinfo
  {volume} {92}},\ \bibinfo {pages} {248 } (\bibinfo {year}
  {2015})}\BibitemShut {NoStop}%
\bibitem [{\citenamefont {Bhawalkar}\ \emph {et~al.}(2013)\citenamefont
  {Bhawalkar}, \citenamefont {Gollapudi},\ and\ \citenamefont
  {Munagala}}]{Bhawalkar:2013:COF:2488608.2488615}%
  \BibitemOpen
  \bibfield  {author} {\bibinfo {author} {\bibfnamefont {K.}~\bibnamefont
  {Bhawalkar}}, \bibinfo {author} {\bibfnamefont {S.}~\bibnamefont
  {Gollapudi}}, \ and\ \bibinfo {author} {\bibfnamefont {K.}~\bibnamefont
  {Munagala}},\ }in\ \href {\doibase 10.1145/2488608.2488615} {\emph {\bibinfo
  {booktitle} {Proceedings of the Forty-fifth Annual ACM Symposium on Theory of
  Computing}}},\ \bibinfo {series and number} {STOC '13}\ (\bibinfo
  {publisher} {ACM},\ \bibinfo {address} {New York, NY, USA},\ \bibinfo {year}
  {2013})\ pp.\ \bibinfo {pages} {41--50}\BibitemShut {NoStop}%
\bibitem [{\citenamefont {Toner}\ and\ \citenamefont {Tu}(1998)}]{TT98}%
  \BibitemOpen
  \bibfield  {author} {\bibinfo {author} {\bibfnamefont {J.}~\bibnamefont
  {Toner}}\ and\ \bibinfo {author} {\bibfnamefont {Y.}~\bibnamefont {Tu}},\
  }\href@noop {} {\bibfield  {journal} {\bibinfo  {journal} {Physical Review
  E}\ }\textbf {\bibinfo {volume} {58}},\ \bibinfo {pages} {4828} (\bibinfo
  {year} {1998})}\BibitemShut {NoStop}%
\bibitem [{\citenamefont {Cucker}\ and\ \citenamefont {Smale}(2007)}]{CS07}%
  \BibitemOpen
  \bibfield  {author} {\bibinfo {author} {\bibfnamefont {F.}~\bibnamefont
  {Cucker}}\ and\ \bibinfo {author} {\bibfnamefont {S.}~\bibnamefont {Smale}},\
  }\href@noop {} {\bibfield  {journal} {\bibinfo  {journal} {IEEE Transactions
  on Automatic Control}\ }\textbf {\bibinfo {volume} {52}},\ \bibinfo {pages}
  {852} (\bibinfo {year} {2007})}\BibitemShut {NoStop}%
\bibitem [{\citenamefont {Motsch}\ and\ \citenamefont {Tadmor}(2014)}]{MT14}%
  \BibitemOpen
  \bibfield  {author} {\bibinfo {author} {\bibfnamefont {S.}~\bibnamefont
  {Motsch}}\ and\ \bibinfo {author} {\bibfnamefont {E.}~\bibnamefont
  {Tadmor}},\ }\href@noop {} {\bibfield  {journal} {\bibinfo  {journal} {SIAM
  Review}\ }\textbf {\bibinfo {volume} {56}},\ \bibinfo {pages} {577} (\bibinfo
  {year} {2014})}\BibitemShut {NoStop}%
\bibitem [{\citenamefont {Chae}\ \emph {et~al.}(2015)\citenamefont {Chae},
  \citenamefont {Clouston}, \citenamefont {Hatzenbuehler}, \citenamefont
  {Kramer}, \citenamefont {Cooper}, \citenamefont {Wilson}, \citenamefont
  {Stephens-Davidowitz}, \citenamefont {Gold},\ and\ \citenamefont
  {Link}}]{CCHK15}%
  \BibitemOpen
  \bibfield  {author} {\bibinfo {author} {\bibfnamefont {D.~H.}\ \bibnamefont
  {Chae}}, \bibinfo {author} {\bibfnamefont {S.}~\bibnamefont {Clouston}},
  \bibinfo {author} {\bibfnamefont {M.~L.}\ \bibnamefont {Hatzenbuehler}},
  \bibinfo {author} {\bibfnamefont {M.~R.}\ \bibnamefont {Kramer}}, \bibinfo
  {author} {\bibfnamefont {H.~L.~F.}\ \bibnamefont {Cooper}}, \bibinfo {author}
  {\bibfnamefont {S.~M.}\ \bibnamefont {Wilson}}, \bibinfo {author}
  {\bibfnamefont {S.~I.}\ \bibnamefont {Stephens-Davidowitz}}, \bibinfo
  {author} {\bibfnamefont {R.~S.}\ \bibnamefont {Gold}}, \ and\ \bibinfo
  {author} {\bibfnamefont {B.~G.}\ \bibnamefont {Link}},\ }\href {\doibase
  10.1371/journal.pone.0122963} {\bibfield  {journal} {\bibinfo  {journal}
  {PLOS ONE}\ }\textbf {\bibinfo {volume} {10}},\ \bibinfo {pages} {1}
  (\bibinfo {year} {2015})}\BibitemShut {NoStop}%
\bibitem [{\citenamefont {Stephens-Davidowitz}(2017)}]{S17}%
  \BibitemOpen
  \bibfield  {author} {\bibinfo {author} {\bibfnamefont {S.}~\bibnamefont
  {Stephens-Davidowitz}},\ }\href@noop {} {\emph {\bibinfo {title} {Everybody
  Lies: Big Data, New Data, and What the Internet Can Tell Us About Who We
  Really Are}}}\ (\bibinfo  {publisher} {Dey Street Books},\ \bibinfo {year}
  {2017})\BibitemShut {NoStop}%
\bibitem [{\citenamefont {Stephens-Davidowitz}(2014)}]{S14}%
  \BibitemOpen
  \bibfield  {author} {\bibinfo {author} {\bibfnamefont {S.}~\bibnamefont
  {Stephens-Davidowitz}},\ }\href {\doibase
  http://dx.doi.org/10.1016/j.jpubeco.2014.04.010} {\bibfield  {journal}
  {\bibinfo  {journal} {Journal of Public Economics}\ }\textbf {\bibinfo
  {volume} {118}},\ \bibinfo {pages} {26 } (\bibinfo {year}
  {2014})}\BibitemShut {NoStop}%
\bibitem [{\citenamefont {Niu}(2013)}]{trends1}%
  \BibitemOpen
  \bibfield  {author} {\bibinfo {author} {\bibfnamefont {H.-J.}\ \bibnamefont
  {Niu}},\ }\href {\doibase 10.1111/jasp.12085} {\bibfield  {journal} {\bibinfo
   {journal} {Journal of Applied Social Psychology}\ }\textbf {\bibinfo
  {volume} {43}},\ \bibinfo {pages} {1228} (\bibinfo {year}
  {2013})}\BibitemShut {NoStop}%
\bibitem [{\citenamefont {Catalini}\ and\ \citenamefont
  {Tucker}(2016)}]{trends2}%
  \BibitemOpen
  \bibfield  {author} {\bibinfo {author} {\bibfnamefont {C.}~\bibnamefont
  {Catalini}}\ and\ \bibinfo {author} {\bibfnamefont {C.}~\bibnamefont
  {Tucker}},\ }\href {\doibase 10.3386/w22596} {\emph {\bibinfo {title}
  {Seeding the S-Curve? The Role of Early Adopters in Diffusion}}},\ \bibinfo
  {type} {Working Paper}\ \bibinfo {number} {22596}\ (\bibinfo  {institution}
  {National Bureau of Economic Research},\ \bibinfo {year} {2016})\BibitemShut
  {NoStop}%
\bibitem [{\citenamefont {Bapna}\ and\ \citenamefont
  {Umyarov}(2015)}]{purchasing}%
  \BibitemOpen
  \bibfield  {author} {\bibinfo {author} {\bibfnamefont {R.}~\bibnamefont
  {Bapna}}\ and\ \bibinfo {author} {\bibfnamefont {A.}~\bibnamefont
  {Umyarov}},\ }\href {\doibase 10.1287/mnsc.2014.2081} {\bibfield  {journal}
  {\bibinfo  {journal} {Management Science}\ }\textbf {\bibinfo {volume}
  {61}},\ \bibinfo {pages} {1902} (\bibinfo {year} {2015})},\ \Eprint
  {http://arxiv.org/abs/http://dx.doi.org/10.1287/mnsc.2014.2081}
  {http://dx.doi.org/10.1287/mnsc.2014.2081} \BibitemShut {NoStop}%
\bibitem [{\citenamefont {Chen}(2012)}]{norms1}%
  \BibitemOpen
  \bibfield  {author} {\bibinfo {author} {\bibfnamefont {X.}~\bibnamefont
  {Chen}},\ }\href {\doibase 10.1111/j.1750-8606.2011.00187.x} {\bibfield
  {journal} {\bibinfo  {journal} {Child Development Perspectives}\ }\textbf
  {\bibinfo {volume} {6}},\ \bibinfo {pages} {27} (\bibinfo {year}
  {2012})}\BibitemShut {NoStop}%
\bibitem [{\citenamefont {Rajtmajer}\ \emph {et~al.}(2016)\citenamefont
  {Rajtmajer}, \citenamefont {Squicciarini}, \citenamefont {Griffin},
  \citenamefont {Karumanchi},\ and\ \citenamefont {Tyagi}}]{RSGK16}%
  \BibitemOpen
  \bibfield  {author} {\bibinfo {author} {\bibfnamefont {S.}~\bibnamefont
  {Rajtmajer}}, \bibinfo {author} {\bibfnamefont {A.}~\bibnamefont
  {Squicciarini}}, \bibinfo {author} {\bibfnamefont {C.}~\bibnamefont
  {Griffin}}, \bibinfo {author} {\bibfnamefont {S.}~\bibnamefont {Karumanchi}},
  \ and\ \bibinfo {author} {\bibfnamefont {A.}~\bibnamefont {Tyagi}},\ }in\
  \href {http://dl.acm.org/citation.cfm?id=2936924.2937025} {\emph {\bibinfo
  {booktitle} {Proceedings of the 2016 International Conference on Autonomous
  Agents \&\#38; Multiagent Systems}}}\ (\bibinfo {year} {2016})\ pp.\ \bibinfo
  {pages} {680--688}\BibitemShut {NoStop}%
\bibitem [{\citenamefont {Hong}\ and\ \citenamefont
  {Espelage}(2012)}]{bullying1}%
  \BibitemOpen
  \bibfield  {author} {\bibinfo {author} {\bibfnamefont {J.~S.}\ \bibnamefont
  {Hong}}\ and\ \bibinfo {author} {\bibfnamefont {D.~L.}\ \bibnamefont
  {Espelage}},\ }\href {\doibase http://dx.doi.org/10.1016/j.avb.2012.03.003}
  {\bibfield  {journal} {\bibinfo  {journal} {Aggression and Violent Behavior}\
  }\textbf {\bibinfo {volume} {17}},\ \bibinfo {pages} {311 } (\bibinfo {year}
  {2012})}\BibitemShut {NoStop}%
\bibitem [{\citenamefont {Christakis}\ and\ \citenamefont
  {Fowler}(2007)}]{obesity}%
  \BibitemOpen
  \bibfield  {author} {\bibinfo {author} {\bibfnamefont {N.~A.}\ \bibnamefont
  {Christakis}}\ and\ \bibinfo {author} {\bibfnamefont {J.~H.}\ \bibnamefont
  {Fowler}},\ }\href {\doibase 10.1056/NEJMsa066082} {\bibfield  {journal}
  {\bibinfo  {journal} {New England Journal of Medicine}\ }\textbf {\bibinfo
  {volume} {357}},\ \bibinfo {pages} {370} (\bibinfo {year} {2007})},\ \bibinfo
  {note} {pMID: 17652652},\ \Eprint
  {http://arxiv.org/abs/http://dx.doi.org/10.1056/NEJMsa066082}
  {http://dx.doi.org/10.1056/NEJMsa066082} \BibitemShut {NoStop}%
\bibitem [{\citenamefont {Renna}\ \emph {et~al.}(2008)\citenamefont {Renna},
  \citenamefont {Grafova},\ and\ \citenamefont {Thakur}}]{weight}%
  \BibitemOpen
  \bibfield  {author} {\bibinfo {author} {\bibfnamefont {F.}~\bibnamefont
  {Renna}}, \bibinfo {author} {\bibfnamefont {I.~B.}\ \bibnamefont {Grafova}},
  \ and\ \bibinfo {author} {\bibfnamefont {N.}~\bibnamefont {Thakur}},\
  }\href@noop {} {\bibfield  {journal} {\bibinfo  {journal} {Economics \& Human
  Biology}\ }\textbf {\bibinfo {volume} {6}},\ \bibinfo {pages} {377 }
  (\bibinfo {year} {2008})},\ \bibinfo {note} {symposium on the Economics of
  Obesity}\BibitemShut {NoStop}%
\bibitem [{\citenamefont {Mednick}\ \emph {et~al.}(2010)\citenamefont
  {Mednick}, \citenamefont {Christakis},\ and\ \citenamefont {Fowler}}]{sleep}%
  \BibitemOpen
  \bibfield  {author} {\bibinfo {author} {\bibfnamefont {S.~C.}\ \bibnamefont
  {Mednick}}, \bibinfo {author} {\bibfnamefont {N.~A.}\ \bibnamefont
  {Christakis}}, \ and\ \bibinfo {author} {\bibfnamefont {H.~J.}\ \bibnamefont
  {Fowler}},\ }\href@noop {} {\bibfield  {journal} {\bibinfo  {journal} {PLoS
  ONE}\ }\textbf {\bibinfo {volume} {5}} (\bibinfo {year} {2010})}\BibitemShut
  {NoStop}%
\bibitem [{\citenamefont {Harakeh}\ and\ \citenamefont
  {Vollebergh}(2012)}]{smoking}%
  \BibitemOpen
  \bibfield  {author} {\bibinfo {author} {\bibfnamefont {Z.}~\bibnamefont
  {Harakeh}}\ and\ \bibinfo {author} {\bibfnamefont {W.~A.}\ \bibnamefont
  {Vollebergh}},\ }\href@noop {} {\bibfield  {journal} {\bibinfo  {journal}
  {Drug and Alcohol Dependence}\ }\textbf {\bibinfo {volume} {121}},\ \bibinfo
  {pages} {220 } (\bibinfo {year} {2012})}\BibitemShut {NoStop}%
\bibitem [{\citenamefont {Friedkin}(2006)}]{friedkin2006}%
  \BibitemOpen
  \bibfield  {author} {\bibinfo {author} {\bibfnamefont {N.~E.}\ \bibnamefont
  {Friedkin}},\ }\href@noop {} {\emph {\bibinfo {title} {A structural theory of
  social influence}}},\ Vol.~\bibinfo {volume} {13}\ (\bibinfo  {publisher}
  {Cambridge University Press},\ \bibinfo {year} {2006})\BibitemShut {NoStop}%
\bibitem [{\citenamefont {Brewer}(1991)}]{B91}%
  \BibitemOpen
  \bibfield  {author} {\bibinfo {author} {\bibfnamefont {M.~B.}\ \bibnamefont
  {Brewer}},\ }\href@noop {} {\bibfield  {journal} {\bibinfo  {journal}
  {Personality and Social Psychology Bulletin}\ }\textbf {\bibinfo {volume}
  {17}},\ \bibinfo {pages} {475} (\bibinfo {year} {1991})}\BibitemShut
  {NoStop}%
\bibitem [{\citenamefont {van Baaren}\ \emph {et~al.}(2009)\citenamefont {van
  Baaren}, \citenamefont {Janssen}, \citenamefont {Chartrand},\ and\
  \citenamefont {Dijksterhuis}}]{BJCD09}%
  \BibitemOpen
  \bibfield  {author} {\bibinfo {author} {\bibfnamefont {R.}~\bibnamefont {van
  Baaren}}, \bibinfo {author} {\bibfnamefont {L.}~\bibnamefont {Janssen}},
  \bibinfo {author} {\bibfnamefont {T.~L.}\ \bibnamefont {Chartrand}}, \ and\
  \bibinfo {author} {\bibfnamefont {A.}~\bibnamefont {Dijksterhuis}},\
  }\href@noop {} {\bibfield  {journal} {\bibinfo  {journal} {Philosophical
  Trans. of the Royal Society B}\ }\textbf {\bibinfo {volume} {364}},\ \bibinfo
  {pages} {2381} (\bibinfo {year} {2009})}\BibitemShut {NoStop}%
\bibitem [{\citenamefont {Gill}(1991)}]{G91}%
  \BibitemOpen
  \bibfield  {author} {\bibinfo {author} {\bibfnamefont {J.}~\bibnamefont
  {Gill}},\ }\href@noop {} {\bibfield  {journal} {\bibinfo  {journal} {Applied
  Numerical Mathematics}\ }\textbf {\bibinfo {volume} {8}},\ \bibinfo {pages}
  {469 } (\bibinfo {year} {1991})}\BibitemShut {NoStop}%
\bibitem [{\citenamefont {Godsil}\ and\ \citenamefont {Royle}(2001)}]{GR01}%
  \BibitemOpen
  \bibfield  {author} {\bibinfo {author} {\bibfnamefont {C.}~\bibnamefont
  {Godsil}}\ and\ \bibinfo {author} {\bibfnamefont {G.}~\bibnamefont {Royle}},\
  }\href@noop {} {\emph {\bibinfo {title} {Algebraic Graph Theory}}}\ (\bibinfo
   {publisher} {Springer},\ \bibinfo {year} {2001})\BibitemShut {NoStop}%
\bibitem [{\citenamefont {Royden}\ and\ \citenamefont
  {Fitzpatrick}(2009)}]{RF09}%
  \BibitemOpen
  \bibfield  {author} {\bibinfo {author} {\bibfnamefont {H.~L.}\ \bibnamefont
  {Royden}}\ and\ \bibinfo {author} {\bibfnamefont {P.~M.}\ \bibnamefont
  {Fitzpatrick}},\ }\href@noop {} {\emph {\bibinfo {title} {{Real
  Analysis}}}},\ \bibinfo {edition} {4th}\ ed.\ (\bibinfo  {publisher}
  {Pearson},\ \bibinfo {year} {2009})\BibitemShut {NoStop}%
\bibitem [{\citenamefont {Lorentzen}(1990)}]{Lorentzen90}%
  \BibitemOpen
  \bibfield  {author} {\bibinfo {author} {\bibfnamefont {L.}~\bibnamefont
  {Lorentzen}},\ }\href@noop {} {\bibfield  {journal} {\bibinfo  {journal} {J.
  Computational and Applied Mathematics}\ }\textbf {\bibinfo {volume} {32}},\
  \bibinfo {pages} {169} (\bibinfo {year} {1990})}\BibitemShut {NoStop}%
\bibitem [{\citenamefont {Bertsekas}(1999)}]{Bert99}%
  \BibitemOpen
  \bibfield  {author} {\bibinfo {author} {\bibfnamefont {D.~P.}\ \bibnamefont
  {Bertsekas}},\ }\href@noop {} {\emph {\bibinfo {title} {{Nonlinear
  Programming}}}},\ \bibinfo {edition} {2nd}\ ed.\ (\bibinfo  {publisher}
  {Athena Scientific},\ \bibinfo {year} {1999})\BibitemShut {NoStop}%
\bibitem [{\citenamefont {Barab\'{a}si}\ and\ \citenamefont
  {Albert}(1999)}]{BA99}%
  \BibitemOpen
  \bibfield  {author} {\bibinfo {author} {\bibfnamefont {A.}~\bibnamefont
  {Barab\'{a}si}}\ and\ \bibinfo {author} {\bibfnamefont {R.}~\bibnamefont
  {Albert}},\ }\href@noop {} {\bibfield  {journal} {\bibinfo  {journal}
  {Science}\ }\textbf {\bibinfo {volume} {286}},\ \bibinfo {pages} {509}
  (\bibinfo {year} {1999})}\BibitemShut {NoStop}%
\bibitem [{\citenamefont {Roughgarden}(2003)}]{Rough03}%
  \BibitemOpen
  \bibfield  {author} {\bibinfo {author} {\bibfnamefont {T.}~\bibnamefont
  {Roughgarden}},\ }\href@noop {} {\bibfield  {journal} {\bibinfo  {journal}
  {J. Computer and System Sciences}\ }\textbf {\bibinfo {volume} {67}},\
  \bibinfo {pages} {341} (\bibinfo {year} {2003})}\BibitemShut {NoStop}%
\bibitem [{\citenamefont {Madan}\ \emph {et~al.}(2012)\citenamefont {Madan},
  \citenamefont {Cebrian}, \citenamefont {Moturu}, \citenamefont {Farrahi},\
  and\ \citenamefont {Pentland}}]{madan2012}%
  \BibitemOpen
  \bibfield  {author} {\bibinfo {author} {\bibfnamefont {A.}~\bibnamefont
  {Madan}}, \bibinfo {author} {\bibfnamefont {M.}~\bibnamefont {Cebrian}},
  \bibinfo {author} {\bibfnamefont {S.}~\bibnamefont {Moturu}}, \bibinfo
  {author} {\bibfnamefont {K.}~\bibnamefont {Farrahi}}, \ and\ \bibinfo
  {author} {\bibfnamefont {A.}~\bibnamefont {Pentland}},\ }\href@noop {}
  {\bibfield  {journal} {\bibinfo  {journal} {IEEE Pervasive Computing}\
  }\textbf {\bibinfo {volume} {11}},\ \bibinfo {pages} {36} (\bibinfo {year}
  {2012})}\BibitemShut {NoStop}%
\bibitem [{\citenamefont {Griffin}\ \emph {et~al.}(2016)\citenamefont
  {Griffin}, \citenamefont {Rajtamajer}, \citenamefont {Squicciarini},\ and\
  \citenamefont {Belmonte}}]{CGT16}%
  \BibitemOpen
  \bibfield  {author} {\bibinfo {author} {\bibfnamefont {C.}~\bibnamefont
  {Griffin}}, \bibinfo {author} {\bibfnamefont {S.}~\bibnamefont {Rajtamajer}},
  \bibinfo {author} {\bibfnamefont {A.}~\bibnamefont {Squicciarini}}, \ and\
  \bibinfo {author} {\bibfnamefont {A.}~\bibnamefont {Belmonte}},\ }\href@noop
  {} {\bibfield  {journal} {\bibinfo  {journal} {Submitted to SIAM J. Applied
  Dynamical Systems}\ } (\bibinfo {year} {2016})}\BibitemShut {NoStop}%
\end{thebibliography}%

\end{document}